\newtheorem{theorem}[]{Theorem}
\newtheorem{lemma}[]{Lemma}
\newcommand{\notimplies}{\;\not\!\!\!\implies}
\newcommand{\rb}[1]{\rotatebox{0}{ #1 }}
\renewcommand{\vec}[1]{\bm {#1}} 
\newcommand{\ev}[2]{\mathbb E_{#1} \left [ #2 \right ]}
  \providecommand\BibTeX{{%
    \normalfont B\kern-0.5em{\scshape i\kern-0.25em b}\kern-0.8em\TeX}}}
\begin{document}
\title{Stochastic Opinion Dynamics for Interest Prediction in Social Networks}

\author{Marios Papachristou}
\affiliation{
  \institution{Cornell University}
}
\email{papachristoumarios@cs.cornell.edu} 
\authornote{Work done when M.P. was an undergraduate student at NTUA.}

\author{Dimitris Fotakis}
\affiliation{
  \institution{National Technical University of Athens}
}
\email{fotakis@cs.ntua.gr}

\renewcommand{\shortauthors}{Papachristou, and Fotakis}

\begin{abstract}

We exploit the core-periphery structure and the strong homophilic properties of online social networks to develop faster and more accurate algorithms for user interest prediction. The core of modern social networks consists of relatively few influential users, whose interest profiles are publicly available, while the majority of peripheral users follow enough of them based on common interests. Our approach is to predict the interests of the peripheral nodes starting from the interests of their influential connections. To this end, we need a formal model that explains how common interests lead to network connections. Thus, we propose a stochastic interest formation model, the Nearest Neighbor Influence Model (NNIM), which is inspired by the Hegselmann-Krause opinion formation model and aims to explain how homophily shapes the network. Based on NNIM, we develop an efficient approach for predicting the interests of the peripheral users. At the technical level, we use Variational Expectation-Maximization to optimize the instantaneous likelihood function using a mean-field approximation of NNIM. We prove that our algorithm converges fast and is capable of scaling smoothly to networks with millions of nodes. Our experiments on standard network benchmarks demonstrate that our algorithm runs up to two orders of magnitude faster than the best known node embedding methods and achieves similar accuracy. 
\end{abstract}

\begin{CCSXML}
<ccs2012>
   <concept>
       <concept_id>10002951.10003260.10003282.10003292</concept_id>
       <concept_desc>Information systems~Social networks</concept_desc>
       <concept_significance>500</concept_significance>
       </concept>
 </ccs2012>
\end{CCSXML}

\ccsdesc[500]{Information systems~Social networks}

\keywords{Interest prediction, core-periphery structure, opinion dynamics}

\maketitle

\begin{flushright}
\small {
\emph{ ``Birds of a feather flock together.'' 
--- Plato, Symposium, ca. 385 BC}
}
\end{flushright}

\section{Introduction}
\label{s:intro}

Most modern large-scale Online Social Networks (OSN) exhibit the so-called \emph{core-periphery structure} (see e.g., \cite{coreperiphery_benson, coreperiphery_stochasticblockmodel, pareto1935mind, coreperiphery_stochasticblockmodel, snyder1979structural, nemeth1985international, wallerstein1987world, papachristou2021sublinear} and the references therein). Namely, their nodes are naturally  partitioned into a \emph{core set} $C$ of nodes tightly connected with each other, and a \emph{periphery set} $U$, where the nodes are sparsely connected, but are relatively well-connected to the core. In most cases (see also our analysis in Fig.~\ref{fig:coverage}), the core nodes almost dominate the rest of the network, in the sense that a small fraction of $\delta n$ high-degree nodes dominate an $(1 - a) n$ fraction of the network's engaged nodes (where ``engaged'' refers to nodes with degree above than a small threshold). If we restrict to engaged nodes only, even a sublinear fraction of nodes dominate almost everything (see also \cite{mgeop, mgeop2, mgeop3}). These influential core nodes, which posses a large number of incoming connections, or \emph{followers}, are also known (and serve) as the \emph{celebrities} or the \emph{influencers} of the network. Influencers tend to publicly expose --- mainly for commercial reasons \cite{khamis2017self, freberg2011social} --- their profile information (friends and interests). 

Another major driving force shaping the structure of social networks is \emph{homophily}, i.e., the property under which connected individuals in a social network have similar interests \cite{blau1, blau2}. Modern large-scale OSN seem to exhibit strong homophilic trends, which was a significant part of our motivation. 

\smallskip\noindent{\bf Approach and Contribution.}
In this work, we leverage homophilic trends and the core-periphery structure of modern OSN to obtain scalable and accurate methods for predicting the interests of the peripheral users. Our approach is to identify and use the influencers of the network as \emph{trend-initializers}. Then, we let the subnetwork consisting of the periphery nodes evolve according to an iterative process initialized to an aggregation of the influencers' features. 

At the conceptual level, we consider a network consisting of some \emph{core users} and the users that follow them, which correspond to the \emph{peripheral users}. In real-world scenaria, we can imagine having a very large social network, and the core users being famous athletes, politicians, singers, fashion models, etc. These users have exposed interests (or labels), which we want to use in order to infer the labels of the peripheral users, for which we do not have information (e.g., due to privacy policies). So, we gather the follower relations between the peripheral users and the core users, and construct a \emph{bipartite graph} with one set being the core users and the other set being the peripheral users. The core set is very small -- in reality \emph{sublinear} -- and the induced bipartite subgraph contains a \emph{very small} fraction of the edges that typically decreases, as the network size increases. This very small fraction of edges on the induced bipartite subgraph can consistently be observed in all of our experiments (see the last row on Table \ref{tab:results}). Moreover, another key reason for which we choose to work with this limited information is that the network may be huge, thus it may be prohibiting to mine all -- or at least a large fraction -- of its edges. In fact, this is a serious practical consideration, taking into account the massive growth of modern OSN, and a considerable volume of work \cite{benson2019link, liben2007link, rhodes2015inferring} has devoted to addressing it. The influencers' sublinear number allows for a quite fast initialization (in worst-case strongly subquadratic-time) of the peripheral users' interests. 
The information provided by aggregating the interests of the core users followed by each peripheral user is a reasonable starting point. However, it does not account for peripheral \emph{user-to-user interactions}, which can be observed in the original network. To explain the missing links between peripheral users and to account for their influence in the peripheral users' interests, we develop a \emph{homophily-based stochastic opinion formation model}, which we call the \emph{Nearest Neighbor Influence Model} (NNIM). NNIM's state at \emph{consensus} (or close to it) accounts for the final interests of the peripheral users, while their neighborhoods act as a replacement for the actual ones. 

More precisely, the Nearest Neighbor Influence Model is a stochastic iterative process. according to which the peripheral users evolve their binary interest vectors. At each timestep, each peripheral user samples a new binary interest vector based on the interests of her $k$ nearest neighbors (wrt. their interest vectors) in the periphery. The general structure of NNIM is inspired by the Hegselmann-Krausse model \cite{hk}. However, NNIM is stochastic and is used as a \emph{generative model}, aiming to explain, through homophily, the coevolution of the network structure and the peripheral user interests (see Table~\ref{tab:stats} and the last paragraph of Section~\ref{s:model}). 

From a more technical viewpoint, our prediction method aims to recover the latent NNIM interest vectors of the peripheral users that maximize the likelihood that NNIM evolves through the periphery nodes as observed. Although the idea is simple, its efficient implementation requires significant effort and care (see Section~\ref{s:analysis}). We use Variational Expectation-Maximization, due to the latent nature of NNIM, since direct maximization of the log-likelihood is intractable. As a result, we obtain a simplified mean-field approximation of NNIM (see the 3rd column of Algorithm~1,  and \eqref{eq:trhk}), which is similar to the classical opinion dynamics equations, thus establishing a connection between stochastic and deterministic opinion dynamics. We prove (Theorem~\ref{thm:convergence}) that our algorithm converges in a finite number of steps and establish an upper bound between the total variation distance, the number of iterations, and the number $k$ of neighbors used in the interest exchange processes.

Our user interest prediction method scales smoothly to networks with millions of nodes, with an \emph{almost linear-time complexity with respect to the network size} and \emph{linear-time with respect to the feature dimension $d$}, for appropriate choices of hyperparameters. Key to our algorithm's scalability is that throughout the NNIM process, each peripheral user interacts only with her $k$-nearest neighbors. We evaluated our method experimentally on six standard network benchmarks taken from \cite{snap,dblp1, dblp2} with quite different characteristics (see Table~\ref{tab:stats}). Our experimental results suggest that our method performs similarly (or often outperforms) sophisticated node embedding and traditional opinion dynamics methods in terms of AUC-ROC and RMSE, whilst being able to run up to $100$ times faster than the best known node embedding methods in networks with up to $10^6$ nodes (see Table~\ref{tab:results}). Compared against more standard baselines, like Collaborative Filtering and Label Propagation, which enjoy similar running time to our algorithm, our approach achieves significantly better accuracy in the most interesting datasets. 

Our work draws ideas from (and contributes to) three major research directions (see also Section~\ref{s:related}). From an algorithmic perspective, we take advantage of the core-periphery structure of OSN to speed up inference in large-scale networks. Moreover, we introduce and analyze a natural stochastic generalization of coevolutionary opinion dynamics, which we eventually utilize for user interest prediction. As a result, we obtain a new truly scalable user prediction approach with excellent accuracy. Our methodology can be extended to a variety of problems in combinatorial optimization and machine learning, where inference from the entire network leads to prohibitive running times\footnote{\textbf{Our implementation is anonymously released in \cite{source_code}.}}.

\subsection{Related Work}
\label{s:related}

\textbf{Core-periphery structure} of networks has mainly gathered attention from socio-economical \cite{wallerstein1987world, krugman1991increasing, maccormack2010architecture} and network modeling perspectives \cite{nemeth1985international, coreperiphery_stochasticblockmodel, benson2019link}. Computer science literature is mostly concentrated in learning core-periphery models. From an algorithmic perspective, the closest work to ours is \cite{avin2017distributed}, where Avin et al. show how to speed up tasks in a distributed setting. However, they do not provide an algorithm for efficiently identifying the core in large networks, as we do in this work. Finally, the work of \cite{papachristou2021sublinear} is close to ours as it proposes a generative model for core-periphery networks that exhibits a core of sublinear size (wrt to the size of the network) that acts as an almost dominating set of the network, fits the model to real-world small-scale networks, and compares with \cite{coreperiphery_benson, tudisco2019nonlinear}. 

\textbf{Opinion Dynamics} models have been around for decades. The best known are the DeGroot model \cite{de2009morality}, the Friedkin-Johnsen (FJ) model \cite{fj}, and the HK model \cite{hk}. Our NNIM model is conceptually close to those in \cite{hk,BGM13,chazelle2011total,chazelle2016inertial}, where the agent opinions evolve as a discrete dynamical system and the opinions at the next step result from an aggregation of ones' and her neighbors' opinions, where the neighborhood is built dynamically from the observations at the current step. NNIM is also similar to the Random HK model \cite{local_interactions}, where each agent chooses uniformly at random $k$ neighbors from a ball of radius $\varepsilon$ centered at her opinion. In NNIM, each peripheral user chooses her $k$ nearest neighbors instead (e.g., as in the $k$-NN model \cite{BGM13}). NNIM provides a \emph{stochastic variant} of known \emph{coevolutionary opinion formation} models, thus generalizing existing deterministic ones. 
At the conceptual level, our work follows a significant and prolific research direction investigating the dynamics and the social and algorithmic efficiency of natural learning and opinion formation processes in social networks \cite{AcemogluO11,ChazelleW19,Danescu-Niculescu-MizilKKL09,GaitondeKT20,MolaviERJ16}. Building on previous work, we show how to efficiently exploit the results of such social learning processes for user interest prediction in OSN. Similarly to our work, \cite{DeVGBG16} aims to predict user opinions over time from a history of noisy signals of their neighbors' opinions. \cite{DeVGBG16} focuses on how opinions evolve over time, identifying conditions and aims at models that scale well wrt event sequence (not the size of the network). In our work, however, we use the consensus state for user interest prediction and aim at models that scale well wrt the size of the network.

\textbf{Multilabel classification} in graphs has a relatively long history. To begin with, the classical work on label propagation \cite{raghavan2007near} infers community memberships in networks via propagating labels between the nodes until a consensus is reached. Besides, similar work in \cite{egonets, bigclam} devises a random graph model to classify nodes with features within communities. Moreover, the upsurge of embedding methods, which use random walks,  matrix factorization-based learning objectives, or signal processing transformations \cite{node2vec, deepwalk, graphwave, nodesketch, graphsage} has been used for multilabel classification. Multilabel classification with embeddings as a \emph{standardized benchmark task} for evaluating embedding methods uses them as inputs to a supervised model, usually \emph{logistic regression}. The input graph nodes typically have features in a high-dimensional space, whereas the target labels lie in a low-dimensional space. In contrast, in our work, inputs and outputs have the same dimensionality. 
 
Our work is also related to \textbf{inference in probabilistic graphical models} with latent variables and with a likelihood that cannot be computed in a computationally efficient manner, because integration for the latent variables significantly affects the running time. Some characteristic examples are the MAG Model in OSN \cite{mag, mag2} and training of HMMs \cite{baum} with the EM algorithm \cite{em}. The EM algorithm maximizes the expectation of the joint likelihood of the data by imposing a distribution over the latent variables. We use the mean-field approximation in our paper \cite{mean_field_1, mean_field_2}, a technique that is widely used in the statistical physics community.

\section{The Nearest Neighbor Influence Model}
\label{s:model}

As discussed in Section~\ref{s:intro}, we assume that the network $G(C \cup U, E, \hat {\vec X})$ consists of a core $C$, a periphery $U$ with size $|U| = n$, and a matrix of initial features $\hat {\vec X}$ with an $d$-dimensional binary vector $\hat {\vec X}_c$ for each $c \in C$ which represents the trends that $c$ endorses throughout the iterative process. The core members serve as \emph{trend-initializers} meaning that their labels do not change throughout the process. NNIM proceeds in steps, where we use the letter $t$ to denote time steps. Each peripheral user $u \in U$ has a $d$-dimensional vector at time $t$, denoted by $\vec X_u^{(t)}$. Each $u \in U$ initializes her vector as a Bernoulli trial with a probability equal to the maximum likelihood estimation (sample mean) given the members of the core she follows. At each step $t \ge 1$ each member of the periphery $u \in U$ observes her $k$-nearest neighbors in the periphery $U$ with respect to the Hamming Norm $\sum_{i = 1}^d \mathbf 1 \left \{ X_{ui}^{(t)} \neq X_{vi}^{(t)} \right \}$, which quantifies how much the agent disagrees with another agent $v \in U$, and constructs her stochastic set $\mathcal K^{(t)}(u)$ . Afterwards the agent constructs the vector $\vec \xi_u^{(t + 1)}$ which is the average of the observed opinions inside the set $\mathcal K^{(t)}(u)$ including the user herself, as $\vec \xi_u^{(t + 1)} = \frac 1 k \sum_{v \in \mathcal K^{(t)}(u)} \vec X_v^{(t)}$. Then each agent updates her opinion $\vec X_u^{(t + 1)}$ at time $t + 1$ drawing a Bernoulli sample from $\vec {Be}(\vec \xi_u^{(t + 1)})$, independently for each coordinate. Consensus is defined by a \emph{stopping criterion} for the stochastic model for an accuracy parameter $\epsilon$, and is given as $\tau(\epsilon) = \inf \{ t \ge 0 | \| \mathbf E [\vec X^{(t + 1)}] - \mathbf E [\vec X^{(t)}] \|_1 \le \epsilon \}$. This criterion agrees with the classical statistical distance criterion since $\| \mathbf E [\vec X^{(t + 1)}] - \mathbf E [\vec X^{(t)}] \|_{1, 1} = \sum_{i = 1}^d \sum_{u \in U} d_{TV} (X_{iu}^{(t + 1)}, X_{iu}^{(t)})$ where $d_{TV}(\cdot, \cdot)$ denotes the total variation distance. This way, at each step $t$, a \emph{stochastic temporal graph} $G_t$ is created, where each agent has a neighborhood that corresponds to her $k$-nearest neighbors, in place of the actual OSN (see also the \textsc{\texttt{NNIM}} procedure in Algorithm~1). A very simple \emph{example} could be the following, suppose that we have 3 agents $v_1, v_2, v_3$ with initial probabilities $(1/2, 1/2, 1/2)^T$ and $k = 2$. The realizations of the opinions are $(0, 1, 1)^T$. The probabilities of the next round are $(0 + 1) / 2 = 1/2$ for $v_1$, and $(1 + 1) / 2 = 1$ for $v_2$ and $v_3$. The new round starts with the agents flipping coins with biases $(1/2, 1, 1)^T$. Suppose that the new realizations are $(1, 1, 1)^T$ and thus the new parameters are $(1, 1, 1)^T$, hence $\tau(1/2) = 2$. 

Intuitively, NNIM aims to explain the space of user labels in the network by homophily. So, NNIM treats the $k$ nearest neighbors of a user wrt. her labels as her highly homophilic nodes. To test this hypothesis, we compare the actual neighborhood of the ground social network with the $k$-nearest-neighbors-neighborhood for each $u \in C \cup U$ . Given the un-initialized directed social network $G(C \cup U, E, \vec {\hat X})$ (where each user has a binary interest vector), we define $\vec \alpha_w = \frac 1 {|N^+(w)| + 1} \sum_{v \in N^+(w) \cup \{ w \} } \hat {\vec X}_v$ and $\vec \beta_w = \frac 1 {k_w} \sum_{v \in \mathcal K(w)} \hat {\vec X}_v$, where $N^+(w)$ is the set of users that $w$ follows and $k_w$ is the number of nearest neighbors we choose for each node $w$ (in our experiment $k_w$ takes the value $|N^+(w)|+1$ or $\lceil \log n \rceil$).  These vectors measure the aggregate effect of the neighborhood (actual or due to $k_w$-nearest-neighbors) on determining the interest vector of $w$ and, therefore, similar values of these two vectors correspond to similar effects on the interest vector of $w$. For this reason, we measure the Root Mean Squared Error $\mathrm {RMSE}(\vec \alpha_w, \vec \beta_w) = d^{-1/2} \| \vec \alpha_w - \vec \beta_w \|$ for each node $w \in C \cup U$. Then, we take a degree weighted average, where the weight of each node is $(1 + |N^+(w)|) / (|E| + |C \cup U|)$, and measure the distance from 100\%. This degree-weighted average puts emphasis on the nodes by order of ``prestige'' in the network $G$. We call this quantity the \emph{Homophilic Index} (HI) of $G$. Intuitively, the HI measures how much the aforementioned two neighborhoods look similar in the feature space (see Table \ref{tab:stats}).

\begin{table}
\centering
\caption{Dataset Statistics and Homophilic Index for values of nearest neighbors $k_u$ being the outdegree (including the user) ($k_{u1} = |N^+(u)| + 1$) and $k_{u2} = \lceil \log n \rceil$.}
\small
\begin{tabular}{llrrrrr}
\toprule
Name & Type & Nodes & Edges & $d$ & $k_{u1}$ & $k_{u2}$ \\
\midrule
facebook \cite{snap, egonets} & ego  & 1.03K & 27.8K  & 576 & 93.24 & 91.03 \\
dblp-dyn \cite{dblp2} & co-authors & 1.23K & 4.6K & 43  & 82.02 & 83.56 \\
fb-pages \cite{snap, musae} & page-page  & 22.5K & 342K & 4 & 91.69 & 92.31 \\
github \cite{snap, musae} & developer  & 37.7K & 578K & 1 & 85.48 & 84.41 \\
dblp \cite{dblp1} & co-authors  & 41.3K & 420K & 29 & 82.54 & 85.62 \\
pokec \cite{snap, pokec} & social  & 1.6M & 30.6M & 280 & 66.10 & 67.72\\
\bottomrule    
\end{tabular}

\label{tab:stats}
\end{table}

\subsection{Model Inference through Variational Expectation-Maximization}
\label{s:analysis}

For the inference problem we are interested in determining the parameters the peripheral nodes in the NNIM model, namely the probability vectors $\{ \vec \xi_u^{(t)} \}_{u \in U}^{t \ge 0}$ of the feature vectors $\{ \vec X_u^{(t)} \}_{u \in U}^{t \ge 0}$ given the initial state of the cores' labels.  We start by forming the optimization objective (log-likelihood) at each step $t$. Initially, according to our setting we assume that we know the initial values of the peripheral user labels as the samples with probabilities equal to the sample average of the influencers of the core she is following, as delineated in  the procedure \textsc{\texttt{Initialize\_Infer}} of Algorithm 1. In reference \cite{bindel2015bad}, Bindel et al. view the opinion formation problem for the FJ model under a \emph{game-theoretical viewpoint} where each agent suffers a quadratic convex cost for not reaching consensus at a given time $t$. Similarly, in our case at each time $t$ is the (instantaneous) log-likelihood that better explains the distribution of the agents parametrized by $\vec \xi^{(t + 1)}$ is needed to be maximized, given the previous state of the agents $\vec X^{(t)}$, that is 

\begin{equation}
    \mathcal L_{\xi}^{(t + 1)} \left (\vec  \xi^{(t + 1)} \right ) = \log \sum_{\vec X^{(t)}} \Pr \left [\vec X^{(t)} | \vec \xi^{(t + 1)} \right ] 
\end{equation}

We observe the initial opinions $\vec X^{(0)}$ of the network and then the opinion vectors are latent, thus inference requires summation over exponentially many events. The opinion vectors are assumed to have the \emph{Markov property}, namely the opinions at a given time are affected only by the previous step. Observe that the stochastic nature of the model imposes intractability on the likelihood functions $\mathcal L_{\xi}^{(t + 1)}$ since it requires a summation over the exponentially-many latent variables $\vec X^{(t)}$ which have binary outcomes. For simplicity, we assume that the interest distribution is approximated by a \emph{variational distribution} $Q^{(t)}$ that makes the latent variables $\{ \vec X^{(t)} \}_{t \ge 1}$ independent, and approaches the actual parameters $\{ \vec \xi^{(t)} \}_{t \ge 1}$ having a form of $Q^{(t)} =  \prod_{u \in U} \prod_{i = 1}^d \left (\phi_{iu}^{(t)} \right )^{X_{iu}^{(t)}} \left (1 - \phi_{iu}^{(t)} \right )^{1 - X_{iu}^{(t)}}$ where $\vec \phi_u^{(t)}$ are the variational parameters that are the ``empirical counterparts'' of the actual parameters $\vec \xi_u^{(t)}$. Using Jensen's Inequality on the likelihood function $\mathcal L_{\xi}^{(t + 1)}$, that is 

\begin{equation}
    \mathcal L_\xi^{(t + 1)} \ge  \mathbb E_{Q^{(t)}} \bigg [\log \Pr [\vec X^{(t)} \big | \xi^{(t + 1)}] \bigg ] + \ev {Q^{(t)}} {-\log Q(\vec X^{(t)})}
\end{equation}

we obtain two terms, the first of which -- referred as the variational lower bound (VB) -- we maximize\footnote{The second term (entropy) is positive and depends only on step $t$.}. The maximization of the VB $\mathcal L_{Q, \xi}^{(t + 1)} = \ev {Q^{(t)}} {\log \Pr  [\vec X^{(t)} \big | \vec \xi^{(t + 1)} ]}$ is a tractable problem \cite{hoffman2013stochastic, mag2} and can be used as a proxy for approximating the actual interest distribution. It can be expressed as

\begin{equation*} \label{eq:large_bound}
\begin{split}
    \mathcal L_{Q, \xi}^{(t + 1)}  = & \mathbb E_{Q^{(t)}} \bigg [ \sum_{i = 1}^d \sum_{u \in U} \sum_{v \in U} \mathbf 1 \left \{ v \in \mathcal K^{(t)}(u) \right \} \\  & \bigg ( X_{iv}^{(t)} \log \xi_{iu}^{(t + 1)}  + \, \left (1 - X_{iv}^{(t)} \right ) \log \left (1 - \xi_{iu}^{(t + 1)}  \right ) \bigg ) \bigg ]
\end{split}
\end{equation*}

Computing the expectation over the stochastic set $\mathcal K^{(t)}(u)$ of the $k$-nearest neighbors exactly still poses computational barriers. We approximate \eqref{eq:large_bound} by choosing the $k$-nearest neighbors in the parameter space.\footnote{It is expected that $\mathcal K^{(t)}(u)$ and  $K^{(t)}(u)$ have significant overlap (see Appendix).} Subsequently, the VB can be expressed as

\begin{equation} \label{approx_lambda}
\begin{split}
    \mathcal L_{Q, \xi}^{(t + 1)} \approx & \sum_{i = 1}^d \sum_{u \in U} \sum_{v \in K^{(t)}(u)} \bigg [ \phi_{iv}^{(t)} \log \phi_{iu}^{(t + 1)} + \\ & \left (1 - \phi_{iv}^{(t)} \right ) \log \left (1 - \phi_{iu}^{(t + 1)} \right ) \bigg ]
\end{split}
\end{equation}

By setting the gradient to zero, we get the following \emph{Inference Algorithm}: 

\begin{enumerate}
    \item Identify the core $C$ of the network (see Sec.~\ref{sec:experimental_setting}). 
    \item \emph{(Optional)} Perform PCA on the feature vectors of the core nodes. 
    \item We initialize each agent's initial value $\vec \phi_{u}^{(0)}$ with the average of the (reduced) feature vectors of the influencers she follows.
    \item Until convergence we perform the following update rule, that comes from setting the gradient of Eq. \ref{approx_lambda} to zero:
    \begin{equation} \label{eq:trhk}
        \vec \phi_u^{(t + 1)} = \frac {1} {k} \sum_{v \in K^{(t)}(u)} \vec \phi_v^{(t)}
    \end{equation}
    \item \emph{(Optional)} Project back to the original space using the inverse PCA transformation and clip values that fall outside $[0, 1]$.
\end{enumerate}

This system of equations rise by observing the instantaneous likelihood at each time $t$.
We can perform \emph{regularization} to the model by adding extra opinions --- for instance, the initial state --- with weights to the model. 
Ending, we define the ``macroscopic'' distribution which is parametrized by $\{ \vec \mu^{(t)} \}_{t \ge 1}$ and has a Bernoulli density over the labels, with parameter vectors defined as $\vec \mu^{(t)} = \frac 1 n \sum_{u \in U} \vec \xi_u^{(t)}$ and displays how the agents behave with respect to trends in general, namely if they adopt (or not) an interest as a whole. Given the calculated parameters $\vec \phi^{(t + 1)}$, we can determine the parameters $\vec \mu^{(t + 1)}$ using the same variational approach. More specifically, the expected log-likelihood $\mathcal L_{Q, \mu}^{(t)}$ of the \emph{macroscopic parameters} $\vec \mu^{(t)}$ under the variational distribution $Q$ is given as $\mathcal L_{Q, \mu}^{(t)} = \sum_{u \in U} \sum_{i = 1}^d \left ( \phi_{iu}^{(t)}  \log \mu_{i}^{(t)} + \left (1 - \phi_{iu}^{(t)} \right ) \log \left ( 1 - \mu_{i}^{(t)} \right ) \right ) $. Invoking the expected value according to the variational parameters and setting $\partial \mathcal L_{Q, \mu}^{(t)}  / \partial \mu_{i}^{(t)} = 0$ for all $1 \le i \le d$ and $1 \le t \le T$. Analogously to \eqref{eq:trhk}, we obtain the update rule  $\vec \mu^{(t)} =  \sum_{u \in U} \vec \phi_u^{(t)} / n$. 

\textbf{Relation to EM.} We refer to the above equations as the \emph{mean field equations} since the variational parameters are ``approximated'' with exactly the same model, but now the process does not involve randomness.  From an EM perspective, we can view our algorithm as having two discrete steps: In the E-step we compute the $k$ nearest neighbors of each agent whereas in the M-step we update the variational parameters by averaging and then compute the ``macroscopic distribution'' by averaging on the new variational parameters per dimension. The form of \eqref{eq:trhk} is very familiar to the classical opinion dynamics equations, like the HK model. We also prove the following theorem regarding finite-time convergence of \textsc{\texttt{Inference\_NNIM}} and about the convergence rate behaviour. 
The former part of the result is proven using Lyapunov Stability Theory and the latter part concerns the convergence rate of a Markov Chain with $k$-regular transition matrices of the sequence $\{ G_t \}_{t \ge 1}$ (see Appendix A). 

\begin{theorem}\label{thm:convergence}
    The system of \eqref{eq:trhk} converges in finite time under any consistent total ordering. Moreover, it suffices to perform $T = O \left ( \log (1 / D) / \log k \right )$ iterations to make the total variation distance between the current state and the consensus state is less than $d \cdot D$.  
\end{theorem}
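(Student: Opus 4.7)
My plan is to prove the two claims with distinct techniques: finite-time convergence via a Lyapunov (potential function) argument, and the quantitative rate by treating \eqref{eq:trhk} as a sequence of products by $k$-regular row-stochastic matrices associated with the graphs $\{G_t\}$.

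For finite-time convergence, I would fix a consistent total ordering of the agents and introduce a monotone potential $V(\{\vec \phi_u^{(t)}\})$, for example a sum of squared pairwise differences weighted consistently with that ordering, or the total variation of the sorted opinion sequence per coordinate. The averaging update moves each $\vec \phi_u^{(t+1)}$ strictly inside the convex hull of its nearest-neighbor cluster, so $V$ is non-increasing and, whenever the system is not at consensus, strictly decreases by a quantity bounded below by the discretization induced by the current $k$-NN partition. The role of the consistent total ordering is to rule out oscillations arising from tie-breaking when agents hold identical opinions: only finitely many distinct $k$-NN configurations can be visited, so the descending chain of $V$ values terminates in finite time at a minimum, which is exactly consensus within every cluster of mutually connected agents.

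For the rate, I would rewrite \eqref{eq:trhk} coordinate-wise as $\vec \phi^{(t+1)}_{\cdot,i} = P^{(t)} \vec \phi^{(t)}_{\cdot,i}$, where $P^{(t)}$ is an $n \times n$ row-stochastic matrix whose rows are $1/k$ times the indicators of the $k$-nearest-neighbor sets $K^{(t)}(u)$. Since the consensus vector is a fixed point, the coordinate error propagates linearly under $P^{(t)}$, and each step contracts the $\ell_\infty$-range of the error by a multiplicative factor governed by the $k$-regularity of $P^{(t)}$ and the connectivity of $G_t$. Iterating the contraction gives per-coordinate total variation below $D$ after $T = O\!\left(\log(1/D)/\log k\right)$ steps, and summing over the $d$ coordinates yields the claimed $d \cdot D$ bound.

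The step I expect to be the main obstacle is the non-stationarity of the chain: $P^{(t)}$ depends on the current opinion profile through the $k$-NN structure, so classical spectral-gap arguments for a fixed Markov chain do not apply directly. I expect the cleanest route is a uniform contraction estimate that holds for every $k$-regular row-stochastic matrix arising from some $k$-NN graph on the current opinions, combined with a short combinatorial argument that $\{G_t\}$ remains sufficiently connected at each step to propagate averaging across the entire population.
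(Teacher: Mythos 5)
Your overall skeleton matches the paper's: a Lyapunov argument for finite-time convergence and a product-of-$k$-regular-stochastic-matrices argument for the rate. However, both halves contain a step that would fail as stated. For the finite-time claim, you assert that whenever the system is not at consensus, $V$ decreases ``by a quantity bounded below by the discretization induced by the current $k$-NN partition,'' and that finitely many $k$-NN configurations therefore force termination. Neither holds: once the $k$-NN configuration stabilizes, the update is a \emph{fixed} linear averaging map, which generically converges only asymptotically (geometrically), with per-step decrements of $V$ tending to zero; finitely many configurations does not imply the descending chain of $V$-values terminates, because $V$ ranges over a continuum. The paper has to work harder here: it first proves only \emph{asymptotic} stability via a Lyapunov function $V(t)=\sum_u \pi_u(t)\|\phi_u(t)-\Pi^T(t)\Phi(t)\|^2$ built from an adjoint (absolute probability) sequence for the non-stationary chain (Theorem~\ref{thm:touri}), then shows the limit points partition the agents into clusters that eventually \emph{split} and stay split (Lemma~\ref{splitting}), and finally uses the persistence of the ordering together with an extremal-agent argument (Lemma~\ref{single}) and the exact-fusion criterion $|\sigma^{(T)}(u)|\ge k$ (Lemma~\ref{termination}) to convert asymptotic convergence into exact consensus in finite time. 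That cluster-isolation machinery is the content you are missing.

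For the rate, the ``uniform contraction estimate that holds for every $k$-regular row-stochastic matrix'' that you hope for does not exist: a worst-case $k$-regular graph (e.g., a disjoint union of $(k{+}1)$-cliques, or a poorly expanding circulant) has second eigenvalue arbitrarily close to the trivial one, so no deterministic per-step contraction of order $k^{-1/2}$ is available, and indeed $G_t$ need not stay connected (clusters split by design). The paper's bound is instead \emph{probabilistic}: it invokes Friedman's resolution of Alon's conjecture, $\Pr[\lambda_2(B(t'))\le 2\sqrt{k-1}+\varepsilon]\ge 1-O(n^{-\tau})$ for random $k$-regular graphs, to get $\lambda_2(A(t'))=o(k^{-1/2})$ a.a.s., hence $d_{TV}(t)=o(k^{-t/2})$ with probability $1-\delta$ for $n=\Omega(\delta^{-1/\tau})$, which yields $T=O(\log(1/D)/\log k)$ and the $d\cdot D$ bound after summing over coordinates. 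If you want to salvage your route, you would need either to justify modeling the $k$-NN graphs as random $k$-regular graphs (as the paper implicitly does) or to replace the spectral step with an ergodicity-coefficient argument, but a worst-case deterministic contraction at rate $1/\sqrt{k}$ is not attainable.
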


\textbf{Per-step Cost.} We use Locality Sensitive Hashing (LSH) \cite{gionis1999similarity} with accuracy $\epsilon > 0$ to construct the nearest neighbor sets, that yields an  almost-linear in $n$ and linear in $d$ and $k$ per-step cost of 
$\widetilde O \left  ( n^{1 + 1 / (1 + \epsilon)^2} d k  \right ) $ 
that scales in large real-world networks.

\textbf{Regularization.} In order to make our model more ``stubborn'' to the initial opinions of the agents we can impose regularization functions $\omega (t)$ such that the negative cross-entropy between the current opinions and the initial opinions is maximized, that is 
$\omega(t) = \alpha \sum_{u \in U} \sum_{i = 1}^d \big [ \phi_{iu}^{(0)} \log \phi_{iu}^{(t)} + \big (1 - \phi_{iu}^{(0)} \big ) \log \big ( 1 - \phi_{iu}^{(t)} \big ) \big ]
$ where $\alpha$ is the regularization parameter. Intuitively, we introduce one more sample to our model that is modeled by the initial conditions. Differentiating the likelihood we arrive at the recurrence relation $\vec \phi_u^{(t + 1)} = (k + \alpha)^{-1} \big [\sum_{v \in K^{(t)}(u)} \vec \phi_v^{(t)} +  \alpha \vec \phi_u^{(0)} \big ] $. This equation is similar to the opinion dynamics model where each agent is ``stubborn'' --- namely stuck to her initial opinion --- with a weight $\alpha$ as an input, such as in  \cite{fj}.

\section{Experiments}

\subsection{Datasets}

We perform experiments to test the \emph{ranking quality} of our system as well as its \emph{accuracy}, on networks of various sizes and from multiple disciplines.  Each of these datasets were obtained by \cite{snap, dblp1, dblp2} and are available online (see Table \ref{tab:stats}). The datasets are the following

\begin{itemize}

\item \emph{facebook \cite{snap, egonets}.} Contains an ego-network of user 107 in the Facebook network. Friendships in Facebook are undirected. To avoid the obvious domination by the ego node, we have removed the outgoing links of the ego node and kept the incoming links. We predict (anonymized) attributes of users.

\item \emph{dblp-dyn \cite{dblp2}.} Co-authorship graph from DBLP (author have at least 10 publications) for 43 publication venues in the period 1994-1998. We predict publication venues. 

\item \emph{facebook-pages \cite{snap, musae}.} A page-page graph of verified Facebook pages, with nodes representing pages and links are mutual likes between them. The pages belong to four categories defined by facebook (politicians, governmental organizations, television shows, companies). 

\item \emph{github \cite{snap, musae}.} Social network of GitHub developers as of June 2019 who have starred at least 10 repositories and edges are mutual follower relations between them. All users in this dataset have one label, whether the user is a web or a machine learning developer.

\item \emph{dblp \cite{dblp1}.} This data set depicts a co-authorship graph built from the DBLP digital library between January 1990 and February 2011. The labels represent 29 publication venues (conferences, journals). 

\item \emph{pokec \cite{snap,pokec}.} Pokec is an anonymized social network with 1.6 million users. We extracted the hobbies from the users profiles and kept the 280 most common hobbies (so that everyone is covered by at least one label). We removed the nodes that have not disclosed their profile information and connections. The final sub-network has a size of 533K nodes\footnote{Order of magnitude is $10^6$ nodes}.

\end{itemize}

\subsection{Experimental Setting} \label{sec:experimental_setting}

For each of the methods we are compared with, we focus on a common same-input-same-output task. Given the binary labeled network $G$, and a target size for the \emph{core set}, we gather the core $C$ that contains the influencers of the network. We build the bipartite graph with $C$ and the set $U$ of the peripheral users, who follow those in $C$. We use the information from the core users to predict the labels of the peripheral users in $U$ (all labels have the same dimension).
Then, we assign a $d$-dimensional vector of probabilities (scores) to each peripheral user $u \in U$, each coordinate of which corresponds to the probability that the given label is 1.  
This setting serves as a common starting point for the majority of the experiments.     

The problem of obtaining the core of influencers is similar to the \emph{Maximum Coverage} (MC) \cite{nemhauser1978analysis, feige1998threshold}, and the greedy algorithm which proceeds in rounds and chooses the node with the maximum number of uncovered neighbors yields a $(1 - 1/e)$-approximation. Running the greedy algorithm ad-hoc has (i) high computational cost in large networks; (ii) after some iterations it may favour nodes that are not ``core'' to the network -- but contribute to the greedy covering -- resulting in poorer features for the prediction task (i.e. peripheral users following very few core nodes). To establish a trade-off between having  a good coverage and good predictions efficiently, we reside on a fork of the original algorithm which we call Bucketed Greedy Bucketed MC (BGMC). In the BGMC setting, we have an upper bound $K$ of nodes we want to use in our coverage. We sort the nodes according to their in-degree and put them into $\log (n / K) / \log \gamma$ non-uniform buckets $V_1, \dots, V_r, \dots $ of sizes $\lceil \gamma K \rceil, \dots, \lceil \gamma^r  K \rceil - \lceil \gamma^{r-1} K \rceil, \dots$, for some $\gamma > 1$. We then start by constraining the neighborhoods of vertices to $V_1$ and run the greedy maximum coverage algorithm on it. If we either cover all the nodes or exhaust the $K$ choices, we return. Otherwise, we iterate using the set $V_2$, and so on, via removing the already covered nodes at each iteration. Although it is evident that the BGMC algorithm does not in general yield a solution set that equals the conventional greedy solution and has a strictly lesser approximation ratio, the algorithm yields remarkably good results when run on OSN. More specifically, for an outdegree threshold value $\tau = 4$ (users with outdegree less than $\tau = 4$ are omitted) a population of $n^{0.7}$ influencers dominate about $74.01 \pm 14.91 \%$ of the networks in question (see the \emph{Coverage} row in Table \ref{tab:results} and Figure \ref{fig:coverage}).\footnote{It is important to note that we experimented with various randomized policies e.g. from \cite{molnar}, however, the results were inferior, both in terms of coverage and in terms of ranking quality and accuracy.}
 
 \begin{figure}
    \centering
    \includegraphics[width=0.45\textwidth]{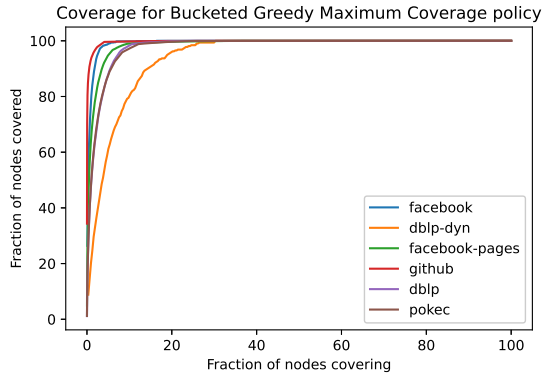}
    \caption{Coverage curve for the BGMC policy for $\tau = 4$.}
    \label{fig:coverage}
\end{figure}

After obtaining the bipartite graph which has one of its sides known to us, and the other is to be predicted, we fit the following algorithms to predict the peripheral user labels

\begin{itemize}
    \item \emph{Collaborative Filtering (via the core).} We use a simple \emph{colla\-borative-filtering-based} approach as a baseline: The labels are propagated from the core nodes to the peripheral nodes, and for each peripheral node the probability that a specific label is 1 is equal to the sample mean of the core nodes she is following. The reason for including this baseline -- which is equivalent to the initialization step of NNIM (before the \texttt{NNIM\_Inference} method)  -- is because we want to measure the positive contribution of the NNIM procedure -- which does a kind of ``sophisticated smoothing'' -- compared to a \emph{static} one. 
    
    \item \emph{Opinion Dynamics Models.} Each peripheral user gets the sample mean of the influencers she's following and then updates her vector using an update rule. We refer to two methods. The former one, NNIM (our proposed method) uses the $k$ Nearest Neighbors  of each user (including herself) and aggregates the result for the next iteration according to \eqref{eq:trhk}. The latter is the Random HK model of \cite{local_interactions} where each agent chooses a random subset of size $k$ from a set of neighbors that are contained in a certain radius $\varepsilon$. For NNIM we experiment with $k \in \{ \lceil \log n \rceil, \lceil \sqrt n \rceil \}$ with (and without) regularization. In the case of regularization we extra static opinions at each iteration with some weight parameter $\alpha$. In our case, we add the sample mean from the influencers, which the particular user is following, at each iteration. This regularization scheme bears a resemblance to the notion of \emph{close-minded agents} in \cite{chazelle2011total,chazelle2016inertial}.  
    We use LSH to (approximately) construct the $k$ nearest neighbor sets efficiently\footnote{We have experimented with exact algorithms, and had similar experimental results in terms of the measured metrics.}. Additionally, to avoid dealing with high dimensionality prior to running the \textsc{\texttt{Inference\_nnim}} procedure, we perform dimensionality reduction (PCA) keeping a 95\% of the explained variance. After running the algorithm, we invert the transformation and clip the variables that fall outside $[0, 1]$. For Random HK, we perform experiments with $k = \lceil \log n \rceil$ and $\varepsilon = \sqrt {d / 2}$. 
    \item \emph{Node Embeddings.} We train\footnote{We convert the directed graph to undirected} node2vec \cite{node2vec}, GraphWave \cite{graphwave} and NodeSketch \cite{nodesketch} embeddings on the same graph and then fit a multilabel logistic regression model. We chose node2vec as a classical random-walk-based approach, GraphWave as a transformation-based approach, and NodeSketch which is a new method based on recursive sketching.
\end{itemize}

For completeness, we also compare with the following off-the-shelf methods applied with knowledge of the \emph{whole network} in order to measure -- up to some extent -- the effect of the network  

\begin{itemize}
    \item \emph{Label Propagation.} We run the label propagation algorithm of \cite{raghavan2007near} where initially the core nodes have labels.
    
    \item \emph{Collaborative Filtering (Dynamic).} The algorithm is similar to \cite{raghavan2007near}: Instead of using the max frequency label at each iteration for each user, we take a mean of the labels of the properly labeled neighbors. Initially, only the core node have labels. This approach is also referred in \cite{de2009morality}. 

\end{itemize}

At this stage we have a ground truth matrix of size $|U| \times d$ with binary entries, each of which representing a whether or not each label is present in each peripheral user, and a $|U| \times d$ matrix with entries in $[0, 1]$ that correspond to the predicted probabilities that each peripheral user will have a certain label  present to her vector. We focus on the following metrics to quantify our findings 

\begin{itemize}
    \item \emph{AUC-ROC.} We use AUC-ROC to argue about the quality of our predictions as a ranking. he AUC-ROC measure is that is serves as a metric for a \emph{bipartite ranking}, that is the AUC-ROC is higher when positive labels (in the ground truth) are ranked above negative labels in terms of the attributes probabilities.\footnote{Alternatively, the AUC-ROC measure equals the area under the true positives-false positives curve for every probability threshold value $\theta \in [0, 1]$. }. We measure the micro-averaged AUC-ROC\footnote{In terms of macro-averaged AUC-ROC, our method is uniformly better as well.} between the two matrices (ground truth and predictions). We measure AUC-ROC in the following occasions: Between all the labels of the two matrices, and between the top 50\% prevalent labels of the ground truth. We believe that these measurements reflect multiple occasions of ranking that we want to perform since, for instance, when recommending items/labels to users, we want to have a good ranking for a percentile of the labels. Furthermore, we avoid measuring the AUC-ROC in ``degenerate'' settings, where, for example, having some labels equal to 0 for a large percentage of the users yields a very high AUC-ROC due to negative labels being placed correctly; which is of course not representative of the reality, since positive labels may be misplaced, even at the presence of a high enough AUC-ROC. 
    \item \emph{$F_1$-Score.} For the Label Propagation experiment, since all outcomes are binary, we report the micro $F_1$-Score.     
    \item \emph{RMSE.} We perform a row-wise mean operation on the ground truth and the predicted matrices and obtain two $d$-dimensional vectors that represent the \emph{macroscopic} distribution (i.e. the distribution vector across all users), and calculate the RMSE between these two vectors. Here, the RMSE metric is used to quantify the \emph{accuracy} of each experiment (closeness of predictions to ground truth). The row-wise mean is performed in order to transform the two quantities to the same domain, i.e. $[0, 1]^d$ instead of $\{ 0, 1 \}^{|U| \times d}$ and $[0, 1]^{|U| \times d}$.
    \item \emph{Running Time.} We measure the running time in seconds to demonstrate the scaling of the largest dataset (pokec). The running time of the smaller datasets is omitted, since there is no clear winner in terms of running time. 
    \item \emph{Coverage \& Influencers Core Size.} We report the core size as a percentage of the total size of the network and the number of covered nodes (by the core users) w.r.t. the size of the network.  
\end{itemize}

\section{Discussion, Impact \& Conclusion}

Regarding baselines, the baselines on the entire network (Collaborative Filtering (Dynamic) and Label Propagation) yield poor results compared to the other methods. More importantly, there is even large deviation compared to the Collaborative Filtering benchmark on the bipartite graph between the core and the peripheral users, perhaps, due to highly correlated interests, which result in poorer estimations of the scores. Furthermore, the interests from the core users only, are very good starting points for estimation of the interest distributions. One possible explanation of this accords with the idea of influential users in a network who have strong opinions and tend to have uncorrelated interests with respect to the other core users, such as in the case of two strongly opinionated presidential candidates. Hence they serve as good initialization points for the initialization of NNIM. The NNIM then serves as a simulator of the peripheral interactions of users based on highly homophilic interactions, and improves both AUC-ROC and RMSE, via performing a ``smoothing'' procedure. Also, in fb-pages, collaborative filtering from the core users produces very good results, comparable to our method, because the page categories (politicians, governmental organizations, television shows, companies) are initially  well-separated, therefore, the initial estimation step produces good estimates, since features are almost orthogonal. However, in the other datasets, where the components of the bipartite graph are not well separated and the dimensionality is higher, NNIM surpasses the baseline.

Regarding node embeddings, we report decent results in terms of AUC-ROC and RMSE in all of our experiments: In the facebook dataset we have the best performance in terms of RMSE and  have AUC near the other methods; less than 1\% for all labels, and similar results for top-50\% and top-1. In the dblp-dyn, fb-pages and github\footnote{The dataset contains one label hence AUC-ROC results remain the same.} dataset we outperform the other methods --- with the exception of the AUC-ROC in top-50\% in dblp-dyn where we have a 4\% percent decrease. Moreover, in the fb-pages dataset, GraphWave achieves a very small RMSE however it yields a low AUC-ROC by far. In the pokec network, GraphWave and Random HK fail to run subject to our resources\footnote{Denoted by the dagger ($\dagger$) symbol. Experiments were run in a Google Colab Notebook.}. Moreover, the NNIM model runs two orders of magnitude faster with $k = \lceil \log n \rceil$ neighbors and one order of magnitude faster with $k = \lceil \sqrt n \rceil$ neighbors compared to node2vec and NodeSketch. The PCA step does not affect the runtime considerably needing only 1 sec since it fits only on the highly influential nodes that are $n^{0.7}$, which account for 1.92\% of the network. We achieve an AUC-ROC of 91.84\% and an RMSE of 0.025 where we surpass NodeSketch in terms of RMSE (6 times lower) and are surpassed in terms of AUC-ROC by 0.3\%. Finally, node2vec has a higher AUC-ROC rate (by a small margin) compared to NNIM with $k = \lceil \sqrt n \rceil$ neighbors.

The core-periphery structure of networks is a well-studied problem, but it has gathered limited attention regarding its algorithmic implications. A sublinear-size core can be in general used to speed up algorithms in ML and network science. The idea is to augment fast computation performed in the (sublinear) core, to augmentation in the periphery, which can in general extended in problems regarding community detection, embedding generation shortest paths computation etc, which in general attempts to surpass the bottleneck that the periphery cannot be easily gathered.

In this work, we present a \emph{specific application of the framework} and benefit from the core-periphery structure of OSN and develop inference algorithms for interest prediction using partial information from the core users. We use the core users as initializers of a homophily-inspired evolutionary process between the peripheral users that exchanges opinions according to $k$ nearest neighbors. 
Our algorithm for inference is computationally efficient and has connections to traditional models, such as the HK. We prove that our algorithm converges in finite time and strictly bound the total variation distance from the consensus state. Our method is compared with others and in networks of various sizes and is capable of performing considerably faster with similar and most of the times better results. Another interesting pathway for future work is modifying the current algorithm for identifying the core in an online manner; where we maintain a priority queue where users are ordered according to their in-degree and at each iteration the user with the largest in-degree is dequeued. 


Closing, the theoretical contributions by themselves do not present any foreseeable socio-economical consequences. From a practical aspect, finding the influencers in a network in terms of their structural properties and them to devise the interests of the rest of the network may have socio-economical consequences. Data provided by these users is usually public (for profit reasons) and thus core users can be used as trend-initializers. However, inferring the interests of peripheral users by using information in a semi-supervised manner may  not be fairly used by external agents.

\begin{table*}[t]
    \centering
    \caption{Experiments with $p = 0.7$, $\gamma = 2$, $D = 10^{-3}$, $\tau = 4$, 10 Trees for LSH and regularization with $\alpha = 1$. Random Seed = 17. The hyperparameters used for the other experiments are as follows: (a) Random HK:  $D = 10^{-3}, \; \varepsilon = \sqrt {d / 2}$; (b) node2vec: $d_{emb} = 128$;
    (c) GraphWave: 200 samples, step size of 0.1, heat coeff. 1.0;
    (d) NodeSketch: $d_{emb} = 32$, 2 iter.}

    \begin{tabular}{lrrrrrrr}
    \toprule
                &   \rb{facebook} &   \rb{dblp-dyn} &   \rb{fb-pages} &   \rb{github} &    \rb{dblp} & \rb{pokec} & \rb{Running Time (s)} \\
    \midrule 
    & \multicolumn{7}{c}{AUC-ROC (all labels)} \\
    \midrule 
  
      Collaborative Filtering (Dynamic) & 76.61      & 66.59      &        70.83     &  59.91   & 72.5  & $\dagger$ & $\dagger$ \\
    
      Collaborative Filtering (Bipartite graph) &    79.51   &    81.13   &            91.74 &  67.85   & 77.61   & 74.75   & $\sim {10^1}$ \\

     node2vec   &    {86.35}   &    87.42   &         84.00      &  67.23   & 69.80 & {96.93} & $\sim 10^3$   \\
     GraphWave  &    86.20   &   86.78   &          70.96   &  45.13   & 69.57 & $\dagger$ & $\dagger$  \\
     NodeSketch &    80.90   &   81.90    &         68.68   &  49.96   & 58.88 & 92.14 & $\sim 10^3$  \\
     Random HK ($k = \lceil \log n \rceil$)  &    85.75   &    86.30    &          71.90    &  50.34   & 68.83  & $\dagger$ & $\dagger$ \\
     NNIM ($k = \lceil \log n \rceil$)      &    84.24   &    88.05   &         {91.86}   &  68.07   & 78.64 & 85.60 & $\sim {10^1}$ \\
     NNIM ($k = \lceil \sqrt n \rceil$)      &    85.82 &     {91.16} &          91.62 &  67.86 &  {81.65} & 91.84 & $\sim 10^2$ \\
     NNIM w/ Reg ($k = \lceil \log n \rceil$) &    84.17   &    87.39   &            91.78 &  {72.31}   & 78.86 & 85.05 & $\sim {10^1}$ \\

    \midrule
    & \multicolumn{7}{c}{AUC-ROC (top 50\% of labels)} \\
    \midrule    
  
     Collaborative Filtering (Dynamic) & 76.61      & 66.59      &        70.83     &  59.91   & 72.5 & $\dagger$  & $\dagger$   \\

     Collaborative Filtering (Bipartite Graph) &    {83.47}   &    86.75   &              {100} &  67.85   & {83.43}   & 77.38 & $\sim {10^1}$   \\

     node2vec   &      54.98 &      {94.92} &            78.69 &    67.23 &  68.53 & {96.94} & $\sim 10^3$ \\
     GraphWave  &      53.97 &      92.91 &            40.11 &    45.13 &  65.70 & $\dagger$  & $\dagger$ \\
     NodeSketch &      55.91 &      92.37 &            46.50  &    49.96 &  58.13 & 92.14 & $\sim 10^3$ \\
     Random HK ($k = \lceil \log n \rceil$)  &      52.82 &      93.10  &            56.14 &    50.34 &  64.49 & $\dagger$ & $\dagger$ \\
     NNIM ($k = \lceil \log n \rceil$)       &      59.08 &      79.32 &            89.00    &    68.27 &  78.69 & 85.80  & $\sim  {10^1}$\\
     NNIM ($k = \lceil \sqrt n \rceil$)     &      58.30  &      90.59 &            88.04 &    67.86 &  80.85 & 91.84 & $\sim 10^2$ \\

     NNIM w/ Reg ($k = \lceil \log n \rceil$) &      59.20  &      81.11 &            88.65 &    {72.31} &  79.10 & 85.05 & $\sim {10^1}$ \\
     
    \midrule
    & \multicolumn{7}{c}{$F_1$-Score (binary predictors)} \\
    \midrule 
    
    Label Propagation & 30.03 & 52.93 & 90.00 & 87.85 & 48.49 & 34.56 & $\sim 10^1$ \\

     \midrule 
    & \multicolumn{6}{c}{RMSE (all labels)} \\
    \midrule    
      
      Collaborative Filtering (Dynamic) &  0.011  &  0.038 &         0.153 &   0.445 &  0.066 & $\dagger$ & $\dagger$ \\
      Collaborative Filtering (Bipartite Graph) &     0.012 &     0.067 &             {4e-17}    &   0.389 &  0.149 &  0.026 & $\sim {10^1}$ \\
     node2vec   &     0.012 &     0.059 &           0.093 &   0.438  &  0.166 & {0.022} & $\sim 10^3$ \\
     GraphWave  &      {0.010}   &     0.052 &    7e-6       &   0.400 &  {0.082} & $\dagger$ & $\dagger$ \\
     NodeSketch &     0.096 &     0.123 &           0.098 &   0.440 &  0.316 & 0.128 & $\sim 10^3$ \\
     Random HK ($k = \lceil \log n \rceil$) &    {0.010} &     0.056 &    {4e-17}     &   0.412 &  0.096 & $\dagger$ & $\dagger$ \\
     NNIM ($k = \lceil \log n \rceil$)      &     0.011 &     0.062  &      {4e-17}      &   0.389 &  0.143 & 0.026 & $\sim  {10^1}$\\
     NNIM ($k = \lceil \sqrt n \rceil$)       &     {0.010} &     {0.050} &          {4e-17}      &   {0.388} &   0.128 &  0.025 & $\sim 10^2$\\
     NNIM w/ Reg ($k = \lceil \log n \rceil $) &     0.012 &     0.066 &             4e-16    &   {0.388} &  0.145 & 0.025 & $\sim  {10^1}$ \\
     Label Propagation &  0.017 &  0.052 &         0.157 &   0.463 &  0.071 &  0.025 & $\sim  {10^1}$ \\

    \midrule
    Coverage (\%) & 88.36 & 97.16 & 72.20 & 68.61 & 66.04 & 51.70 & --- \\ 
    \midrule
    Influencers (Core size) (\%) & 12.47 & 11.83 & 4.94 & 4.23 & 4.12 & 1.92 & --- \\
    \midrule
    Fraction of edges in bipartite graph (\%) & 12.99 & 11.14 & 8.93 & 7.14 & 8.03 & 5.2 & --- \\
    \bottomrule
    \end{tabular}
    \\
    \label{tab:results}
\end{table*}

\bibliographystyle{ACM-Reference-Format}
\bibliography{references.bib}

\appendix

\section{Theoretical Properties of NNIM}

\textbf{Tie Breaking.} We\footnote{All our proofs regarding convergence assume that the model has $d = 1$ dimension (unless otherwise stated), and the coordinate indices are discarded for ease of notation. The results can be extended to $d$ dimensions defining the appropriate structures (convex hull) to showcase cluster isolation phenomena as described below.} define the set $\hat K^{(t)}(u)$ of the $k$ nearest neighbors of $u$. In case of ties, these ties are broken arbitrarily. However, as we prove below, the relative ordering of vertices persists from one round to the next, even if ties are broken arbitrarily
 
\begin{lemma}[Persistence of Relative Ordering] \label{ordering}
    If for two agents $u$ and $v$ at time $t_0$ the relation $\phi_u^{(t_0)} \le \phi_v^{(t_0)}$ holds, then  $\phi_u^{(t)} \le \phi_v^{(t)}$ for all $t \ge t_0$ under arbitrary breaking of ties. 
\end{lemma}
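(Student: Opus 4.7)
The plan is induction on $t \ge t_0$. The base case is the hypothesis. For the inductive step I will use the (slightly strengthened) statement that \emph{every} pair with a time-$t_0$ inequality still satisfies it at time $t$; this is what the induction naturally delivers and it lets me reason about the full sorted sequence of agents, not just $u$ and $v$ in isolation.

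Sort the agents by $\phi^{(t)}$ into a total order $w_1 \preceq \cdots \preceq w_n$, breaking ties using the consistent total ordering invoked in Theorem~\ref{thm:convergence}; write $u = w_i$, $v = w_j$ with $i \le j$. The first structural observation is that in one dimension (per the footnote's standing assumption), for each index $l$ the nearest-neighbor set $K^{(t)}(w_l)$ can be written as a contiguous block $I_l = [a_l, b_l]$ of $k$ consecutive indices containing $l$. This holds because the distance $|\phi_{w_l}^{(t)} - \phi_{w_m}^{(t)}|$ is monotone in $m$ on each side of $l$, and the consistent tie-breaking rule applied at the boundary keeps the chosen block contiguous.

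The technical heart is a \emph{window monotonicity} claim: $i \le j$ implies $a_i \le a_j$ (and hence $b_i \le b_j$, since both windows have size $k$). I would prove this by contradiction. If $a_j < a_i$ then $b_j < b_i$, so $w_{a_j} \in I_j \setminus I_i$ and $w_{b_i} \in I_i \setminus I_j$. Because $w_{b_i}$ is among the $k$ nearest to $w_i$ while $w_{a_j}$ is not, the $k$-nearest definition delivers $\phi_{w_i}^{(t)} - \phi_{w_{a_j}}^{(t)} \ge \phi_{w_{b_i}}^{(t)} - \phi_{w_i}^{(t)}$; by the symmetric argument at $w_j$, $\phi_{w_{b_i}}^{(t)} - \phi_{w_j}^{(t)} \ge \phi_{w_j}^{(t)} - \phi_{w_{a_j}}^{(t)}$. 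Adding the two and cancelling the boundary values yields $\phi_{w_i}^{(t)} \ge \phi_{w_j}^{(t)}$, which combined with the sort order forces $\phi_{w_i}^{(t)} = \phi_{w_j}^{(t)}$; in the remaining tied case the consistent tie-breaking rule selects the same neighbor multiset (up to swaps of tied agents), so $\sum_{l \in I_i} \phi_{w_l}^{(t)} = \sum_{l \in I_j} \phi_{w_l}^{(t)}$ and the target inequality holds with equality.

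Given window monotonicity, the averaging update~\eqref{eq:trhk} yields
\[
    k\bigl(\phi_v^{(t+1)} - \phi_u^{(t+1)}\bigr) = \sum_{l=b_i+1}^{b_j} \phi_{w_l}^{(t)} - \sum_{l=a_i}^{a_j-1} \phi_{w_l}^{(t)},
\]
where both sums contain $b_j - b_i = a_j - a_i$ terms. Pairing the $r$-th term of each side and using that $(\phi_{w_l}^{(t)})_l$ is non-decreasing in $l$, each paired difference is non-negative, giving $\phi_u^{(t+1)} \le \phi_v^{(t+1)}$ and closing the induction. The main obstacle in this plan is the window-monotonicity sub-lemma together with the careful handling of ties; once the contiguity and monotonicity of the windows are pinned down using the consistent total ordering, the rest is a standard sliding-window argument for nearest neighbors on the line.
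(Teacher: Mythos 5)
Your proof follows essentially the same route as the paper's: both reduce the claim to showing that the sorted elements of $u$'s neighbor set are pointwise dominated by the corresponding sorted elements of $v$'s, after which summing and dividing by $k$ closes a one-step induction. The difference is one of rigor rather than strategy: the paper asserts the pointwise matching of leftmost elements ``by the definition of $\hat K^{(t_0)}(u)$'' and iterates, whereas you actually prove the domination, via contiguity of the $k$-nearest-neighbor windows on the line and the four-point window-monotonicity argument; that sub-lemma is the real content of the statement and your version supplies it. One caveat applies to both proofs: the lemma claims persistence under \emph{arbitrary} tie-breaking, but the domination step (your window monotonicity, the paper's leftmost-element matching) can genuinely fail when coincident agents break ties inconsistently --- e.g., with opinions $0, \tfrac12, \tfrac12, 1$ and $k = 3$, the two agents at $\tfrac12$ may adopt the neighbor sets $\{u, v, q\}$ and $\{p, u, v\}$ respectively and move to $\tfrac23$ and $\tfrac13$, reversing the (tied) order. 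Your retreat to the globally consistent tie-breaking order in that case is the correct repair, and is the device the paper itself introduces later for convergence, but be aware that it proves a slightly weaker statement than the ``arbitrary ties'' version announced in the lemma; the paper's own proof silently has the same gap.
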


\begin{proof}
    Order the elements of $\hat K^{(t_0)}(u)$ and $\hat K^{(t_0)}(v)$ by their distance to 0. We pick the lefmost element $w \in \hat K^{(t_0)}(u)$, which is related to the leftmost element $z \in K^{(t_0)}(v)$ by the definition of $\hat K^{(t_0)}(u)$ as $\phi_w^{(t_0)} \le \phi_z^{(t_0)}$. We remove the two points and iterate. We finally sum the resulting inequalities and get the result for $t = t_0 + 1$. The case for every $t \ge t_0$ follows inductively. 
\end{proof}

However, an arbitrary tie-breaking mechanism, does not guarantee that our algorithm converges. Hence, we need to devise a systematic ordering under which we resolve ties which we use to prove that our algorithm converges. A natural total ordering $\prec_{i, t}$ of the vertices is to give (globally consistent) ids to the vertices and use the ids as a secondary criterion to break ties in a unique way. The sets $K^{(t)}(u)$ of the $k$ nearest neighbors are defined with respect to the $\prec_{i, t}$ total ordering relation and therefore \emph{ties are eliminated}. We also define  the set $\sigma^{(t)}(u) = \{ v \in U\mid \vec \phi_v^{(t)} = \vec \phi_v^{(t)} \}$. Note that $v_j \prec_{i, t} v_\ell \notimplies v_j \prec_{i, t + 1} v_\ell$. Moreover, we observe that when two agents ``fuse'' together at time $t_0$, they remain fused for all $t \ge t_0$. The set $\sigma^{(t)}(u)$ is monotone, i.e. $t_1 \le t_2 \iff \sigma^{(t_1)}(u) \subseteq \sigma^{(t_2)}(u)$ for all $u \in U$. 

\begin{lemma}[Termination] \label{termination}

The NNIM algorithm converges at time $T \in \mathbb N \cup \{ \infty \} $ if and only if $|\sigma^{(T)}(u)| \ge k$ for every $u \in U$.

\end{lemma}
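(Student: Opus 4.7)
The plan is to establish both directions of the equivalence by direct analysis of the update rule $\phi_u^{(T+1)} = \tfrac{1}{k}\sum_{v \in K^{(T)}(u)} \phi_v^{(T)}$, exploiting the monotonicity of extremal $\phi$-values and the persistence of the relative ordering from Lemma~\ref{ordering}.

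For the sufficiency direction ($\Leftarrow$), assume $|\sigma^{(T)}(u)| \ge k$ for every $u \in U$. First I would argue that the consistent total ordering $\prec_{i,T}$ forces $K^{(T)}(u) \subseteq \sigma^{(T)}(u)$: every member of $\sigma^{(T)}(u)$ lies at Hamming distance zero from $u$ while every non-member lies at strictly positive distance, so any fixed id-based tie-break selects the $k$-nearest set exclusively from $\sigma^{(T)}(u)$. Averaging $k$ identical values then yields $\phi_u^{(T+1)} = \phi_u^{(T)}$ for every $u$, so $T$ is a fixed point and the algorithm has converged.

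For the necessity direction ($\Rightarrow$), suppose the algorithm has converged at $T$ and partition $U$ into level sets $A_1,\dots,A_L$ with distinct values $\phi_1 < \dots < \phi_L$. I would begin with the extremal groups. For $v \in A_1$ every $v' \notin A_1$ has $\phi_{v'} > \phi_1$, so $\tfrac{1}{k}\sum_{v' \in K^{(T)}(v)} \phi_{v'} \ge \phi_1$ with equality only when all $k$ summands equal $\phi_1$; the fixed-point condition then forces $K^{(T)}(v) \subseteq A_1$, i.e., $|A_1| \ge k$, and a symmetric argument at the top level yields $|A_L| \ge k$.

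The main obstacle is handling the interior groups $A_j$ with $1 < j < L$, since a shortfall $|A_j| < k$ requires extra neighbors from outside $A_j$ whose contributions from below and above could in principle cancel. To rule this out I propose an inductive peeling argument: once $|A_1| \ge k$ is established, every $v \in A_1$ satisfies $K^{(T)}(v) \subseteq A_1$, so $A_1$ is a closed class under the Markov chain $P_{uv} = \tfrac{1}{k}\mathbf{1}\{v \in K^{(T)}(u)\}$ induced by the update, and likewise for $A_L$. Reapplying the extremal argument to the next-innermost levels $A_2$ and $A_{L-1}$ then yields $|A_2|, |A_{L-1}| \ge k$, the delicate step being to exclude degenerate balanced configurations where $\phi_j$ happens to be a convex combination of neighboring level values; here the id-based tie-breaking together with Lemma~\ref{ordering} forces an asymmetric split whenever a distance-tie would otherwise permit exact balancing, so the averaging strictly moves $\phi_v$ off $\phi_j$ and contradicts the fixed-point hypothesis. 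Iterating the peeling inward from both ends produces $|A_j| \ge k$ for every $j$, completing the proof.
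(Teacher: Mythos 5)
Your $(\Leftarrow)$ direction is correct and is essentially the paper's: when $|\sigma^{(T)}(u)| \ge k$, the $k$ nearest neighbours of $u$ under $\prec_{i,T}$ all lie at distance zero, so the average reproduces $\phi_u^{(T)}$. Your $(\Rightarrow)$ direction is more structured than the paper's, which simply writes the fixed-point identity $\phi_u^{(T)} = |K^{(T)}(u)\setminus\sigma^{(T)}(u)|^{-1}\sum_{v\in K^{(T)}(u)\setminus\sigma^{(T)}(u)} \phi_v^{(T)}$ and declares it a contradiction ``since there are no constraints on the values which impose such a relation.'' Your extremal argument for $A_1$ and $A_L$ is sound and genuinely proves something the paper only asserts, and you have correctly located the crux: interior level sets where contributions from below and above could cancel.

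However, your proposed resolution of that crux does not close the gap. First, the peeling step fails: establishing $|A_1|\ge k$ makes $A_1$ closed in the sense that its members select only within $A_1$, but it does not prevent members of $A_2$ from selecting neighbours in $A_1$, so $A_2$ is not extremal in any residual system and the one-sided monotonicity argument cannot be ``reapplied.'' Second, the claim that id-based tie-breaking ``forces an asymmetric split'' is unfounded: $\prec_{i,t}$ orders by distance and then by id, and nothing prevents the smallest ids among a set of equidistant candidates from splitting evenly across the two sides. Concretely, take $k=3$ and values $0,0,0,\tfrac12,1,1,1$: the three nodes at $0$ and the three at $1$ each select within their own level, and if the ids are such that the middle node's two tied neighbours are one node at $0$ and one at $1$, its update gives $(0+\tfrac12+1)/3=\tfrac12$, so the whole configuration is a fixed point with $|\sigma^{(T)}(u)|=1<k$ for the middle node. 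Moreover, exact balancing can occur with no distance tie at all (e.g.\ two selected neighbours at distance $\delta$ below and one at distance $2\delta$ above sum to zero deviation). So the balanced interior configuration is a genuine obstruction rather than a degeneracy that tie-breaking removes; a complete proof of the ``only if'' direction must either exclude such configurations by a genericity or reachability argument or weaken the statement, and neither your mechanism nor the paper's one-line dismissal accomplishes this.
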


\begin{proof} 

$(\impliedby)$ This direction is trivial. Let $|\sigma^{(T)}(u)| \ge k$ for all $u \in U$. Then $\sigma^{(T)}(u) \supseteq K^{(T)}(u)$ for all $u \in U$. The result follows by applying the update rule and the definition of $\sigma^{(T)}(u)$. 

\noindent $(\implies)$ Suppose that the NNIM algorithm converges. Equivalently for every $t \ge T$ and for every $w \in U$ we have $\phi_w^{(t)} = \phi_w^{(T)}$. We reside in the case that $t = T + 1$ since the rest follows by induction. Suppose that there exists some $u \in U$ such that $|\sigma^{(T)}(u)| < k$. Then the set $K^{(T)}(u) \setminus \sigma^{(T)}(u)$ is non-empty. So $\phi_u^{(T+1)} = \frac {k - |K^{(T)}(u) \setminus \sigma^{(T)}(u)|} k \phi_u^{(T)} + \frac 1 k  \sum_{w \in K^{(T)}(u) \setminus \sigma^{(T)}(u)} \phi_v^{(T)}
$

From the fact that the system has converged we obtain that
$
 \phi_u^{(T)} = \frac 1 {|K^{(T)}(u) \setminus \sigma^{(T)}(u)|}  \sum_{w \in K^{(T)}(u) \setminus \sigma^{(T)}(u)} \phi_v^{(T)}
$
which yields a \emph{contradiction} since there are no constraints on the values of $\phi_v^{(T)}$ which impose such a relation. Therefore, for every $w \in U$ the set $\sigma^{(T)}(w)$ contains at least $k$ elements. 

\end{proof}

We define the distance of two sets $W, Z \subseteq U$ as the quantity $    \delta_{WZ}^{(t)} = \min_{w \in W, z \in Z} \| \phi_w^{(t)} - \phi_z^{(t)} \| 
$, which satisfies the properties of a metric (non-negativity, identity, symmetry, subadditivity). Moreover we define that two (non-overlapping) intervals \emph{split} if and only if the $k$-nearest neighbor of each of the closest points are less than $\delta_{WZ}^{(t)}$ for some $t \ge 0$. 
\begin{lemma} \label{splitting}
    If two non-overlapping intervals split at $t_0$ then they remain split for all $t \ge t_0$. 
\end{lemma}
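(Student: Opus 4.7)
The plan is to exploit Lemma~\ref{ordering} (persistence of relative ordering) together with the convexity of the update rule \eqref{eq:trhk} to show that once $W$ and $Z$ split, neither the gap $\delta_{WZ}^{(t)}$ between the two clusters nor the intra-cluster tightness that certifies the split can degrade. Working in $d=1$ as stipulated, let $w^{*}$ denote the rightmost point of $W$ and $z^{*}$ the leftmost point of $Z$ at time $t_0$, so that $\delta_{WZ}^{(t_0)} = \phi_{z^{*}}^{(t_0)} - \phi_{w^{*}}^{(t_0)}$; the split hypothesis asserts that the $k$-th nearest neighbor of $w^{*}$ lies within distance strictly less than $\delta_{WZ}^{(t_0)}$, and symmetrically for $z^{*}$.

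First, I would show that the gap cannot shrink. Since every point of $Z$ is at distance at least $\delta_{WZ}^{(t_0)}$ from $w^{*}$, the split condition forces $K^{(t_0)}(w^{*}) \subseteq W$; hence $\phi_{w^{*}}^{(t_0+1)}$, being the average of $k$ positions in $W$, does not exceed $\phi_{w^{*}}^{(t_0)}$. Symmetrically $\phi_{z^{*}}^{(t_0+1)} \ge \phi_{z^{*}}^{(t_0)}$. Invoking Lemma~\ref{ordering}, $w^{*}$ remains the rightmost element of $W$ at $t_0+1$ and $z^{*}$ the leftmost of $Z$, which yields $\delta_{WZ}^{(t_0+1)} \ge \delta_{WZ}^{(t_0)}$. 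A straightforward induction then gives monotonicity of the gap for all $t \ge t_0$.

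Second, I would argue that the condition certifying the split is itself preserved. Let $r_0 < \delta_{WZ}^{(t_0)}$ be the radius of $K^{(t_0)}(w^{*})$ around $\phi_{w^{*}}^{(t_0)}$. By Lemma~\ref{ordering} each $v \in K^{(t_0)}(w^{*})$ satisfies $\phi_v^{(t_0+1)} \le \phi_{w^{*}}^{(t_0+1)}$; moreover, since $K^{(t_0)}(w^{*})$ already supplies $k$ mutually close points, each such $v$ has its $k$-th nearest neighbor within distance $2r_0$, so $\phi_v^{(t_0+1)}$ stays in a bounded neighborhood of the original cluster. Thus $w^{*}$ at $t_0+1$ still has at least $k$ witnesses within a radius bounded by a function of $r_0$, which is strictly less than $\delta_{WZ}^{(t_0+1)}$. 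Iterating the two observations across $t$ gives the lemma.

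The main obstacle will be making the last step quantitatively tight: a naive convex-combination argument yields only a crude multiplicative bound (something like $2r_0$ or $3r_0$) on how far the points of $K^{(t_0)}(w^{*})$ can drift, since their own nearest neighbors may sit to the left in $W$ and pull them away from $w^{*,(t_0+1)}$. Closing this requires using Lemma~\ref{ordering} more carefully to confine the updated cluster to an interval of diameter at most $r_0$ (for instance, by observing that the leftward motion of $v$ is offset by the simultaneous leftward motion of $w^{*}$, so relative distances within the cluster contract rather than expand), thereby preserving the strict inequality against $\delta_{WZ}^{(t_0+1)}$ without needing additional slack in the split hypothesis.
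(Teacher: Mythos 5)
Your gap-monotonicity argument is exactly the paper's proof: the split hypothesis confines $K^{(t_0)}(\hat w)$ to positions at most $\phi_{\hat w}^{(t_0)}$ and $K^{(t_0)}(\hat z)$ to positions at least $\phi_{\hat z}^{(t_0)}$, so averaging moves the two boundary points weakly apart; Lemma~\ref{ordering} keeps them the extreme points of their clusters, and induction finishes. The paper stops there, in effect reading ``remain split'' as ``the gap $\delta_{WZ}^{(t)}$ is non-decreasing.''

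Your second step --- verifying that the $k$-NN radius of the boundary points stays below the (new) gap --- targets a real issue that the paper's own proof leaves implicit, but your proposed resolution does not work. The claim that ``relative distances within the cluster contract rather than expand'' is false in general: Lemma~\ref{ordering} preserves only the \emph{order} of agents, not their pairwise distances. For $k = 2$ and positions $0, 1, 2, 3$, the agent at $1$ may average over $\{0,1\}$ while the agent at $2$ averages over $\{2,3\}$, so their distance doubles in one step; the $k$-NN update is not a contraction on intra-cluster distances. The honest bound you compute ($2r_0$ or $3r_0$ via the triangle inequality) is what a local argument gives, and since the hypothesis only supplies $r_0 < \delta_{WZ}^{(t_0)}$, it cannot close the induction. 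Preserving the full split certificate requires a stronger invariant than the boundary points' $k$-NN radii --- for instance the one actually used downstream in Lemma~\ref{finite}, where every point of $W$ lies within $\epsilon$ of a common limit, so the whole cluster has diameter at most $2\epsilon < D$, each point's $k$-NN set lies inside $W$, and convexity of the update keeps the cluster inside its own convex hull while the gap grows. As written, your proposal reproduces what the paper proves and, as you yourself acknowledge, leaves open the same step the paper glosses over.
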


\begin{proof}
    Let $W, Z \subseteq U$ be two non-overlapping clusters that have split at $t_0$. Let $\hat w, \hat z$ be the closest points of $W, Z$. Without loss of generality let $\phi_{\hat w}^{(t_0)} < \phi_{\hat z}^{(t_0)}$. Then for all $u \in K^{(t_0)}(w)$ we have that $\phi_{u}^{(t_0)} \le \phi_{\hat w}^{(t_0)}$. By summing up we get $\phi_{\hat w}^{(t_0 + 1)} \le \phi_{\hat w}^{(t_0)}$. Similarly $\phi_{\hat z}^{(t_0)} \le \phi_{\hat z}^{(t_0 + 1)}$. Therefore the minimum distance increases. Hence the sets remain split at $t_0 + 1$. Inductively the sets remain split for all $t \ge t_0$ 
\end{proof}

We define the \emph{splitting time} of $W$ and $Z$ as the minimum $t_0$ that the split occurs. We also define that a subset of (consecutive) agents $W \subseteq U$ of cardinality at least $k$ is said to be \emph{isolated} if and only if there exists some $t_0 \ge 0$ such that it splits from the left set $l(W) = \{ v \in U \setminus W | \phi_v^{(t_0)} < \inf_{w \in W} \phi_w^{(t_0)} \}$ and the right set $r(W) = \{ v \in U \setminus W | \phi_v^{(t_0)} > \sup_{w \in W} \phi_w^{(t_0)} \}$. 

\textbf{Model Convergence.} We now write the system in vector format $\Phi(t + 1) = A(t) \Phi(t)$ where $\Phi(t)$ is the column vector with elements $\phi_u(t)$ and $A(t)$ is the stochastic matrix with entries $A_{uv}(t) = 0$ if $v \not \in K^{(t)}(u)$ and $A_{uv}(t) = 1 / k$ otherwise. We use the following

\begin{theorem}[Theorem 1 of \cite{nedic2012multi}] \label{thm:touri}

Let $ \{A(t) \}$ be a sequence of stochastic matrices such that the following two properties hold: (i) There exists a scalar $\gamma \in (0, 1]$ such that $A_{ii} \ge \gamma$ for all $i \in [n], t \ge 0$; (ii) There exists $\alpha \in (0, 1]$ s.t. $\forall \emptyset \neq S \subset [n]$ and $\bar S = [n] \setminus S$ it holds that $\sum_{i \in S, j \in \bar S} A_{ij}(t) \ge \alpha \sum_{j \in \bar S, i \in S} A_{ji}(t)$ for all $t \ge 0$. Then the dynamics $\Phi(t + 1) = A(t) \Phi(t)$ admit an adjoint sequence with probability vectors which are uniformly bounded away from 0.

\end{theorem}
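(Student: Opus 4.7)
The goal is to produce a sequence of probability vectors $\{\pi(t)\}_{t \ge 0}$ with $\pi(t+1)^{\top} A(t) = \pi(t)^{\top}$ and $\min_{j,t} \pi_j(t) \ge c$ for some $c > 0$. I would proceed in three steps: first, construct candidate adjoint vectors by a compactness argument on backward matrix products; second, verify the adjoint identity via ergodicity driven by conditions~(i) and~(ii); third, establish the uniform lower bound by amortizing the flow-balance condition~(ii) across time.

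For the first two steps, define $B(t,s) := A(t-1)\, A(t-2) \cdots A(s)$ for $s \le t$, a row-stochastic matrix. By compactness of the simplex and a Cantor diagonal argument, one extracts a subsequence $t_k \to \infty$ along which the rows of $B(t_k, s)$ converge for every $s$. I would then argue that in fact every row of $B(t,s)$ converges (without subsequences) to the same probability vector $\pi(s)$; this ``ergodicity'' uses the interplay of (i) and (ii). Condition~(i) makes each $A(t)$ lazy, with diagonal at least $\gamma$, preventing drastic one-step reshuffling, while condition~(ii) forces two-way exchange across every bipartition with ratio $\alpha$. A Dobrushin-type coalescence estimate then shows that for each column $j$, the oscillation $\omega_j(t,s) := \max_i B(t,s)_{ij} - \min_i B(t,s)_{ij}$ decays geometrically at a rate controlled by $\alpha$ and $\gamma$. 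Passing to the limit in $B(t+1,s) = B(t+1,t)\, B(t,s)$ yields $\mathbf{1}\, \pi(s)^{\top} = \mathbf{1}\, \pi(s+1)^{\top} A(s)$, hence the adjoint relation.

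For the third step, which is the main obstacle, I would exploit the backward identity $\pi(s)^{\top} = \pi(s+1)^{\top} A(s)$ directly. For any bipartition $(S, \bar S)$ of $[n]$ one can write
$$\pi_S(s+1) - \pi_S(s) = \sum_{i \in S} \pi_i(s+1) \sum_{j \in \bar S} A_{ij}(s) - \sum_{i \in \bar S} \pi_i(s+1) \sum_{j \in S} A_{ij}(s),$$
and combine this with (ii) to show that neither $\pi_S(s)$ nor $\pi_{\bar S}(s)$ can stay arbitrarily close to $0$ over time. Iterating over bipartitions of increasing size (a peeling or induction-on-$|S|$ argument) propagates a uniform lower bound $\pi_j(t) \ge c$ with $c$ depending only on $n, \gamma, \alpha$. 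The hard part will be avoiding multiplicative degradation of the constant: a naive bound that applies (ii) once per factor of $B(t,s)$ would let the positivity constant collapse as $t - s$ grows. The key is to use laziness~(i) to ``refresh'' the balance at each step, so that the accumulated ratio remains bounded below; this is the step where conditions~(i) and~(ii) must be used in tandem, and constitutes the real content of the theorem.
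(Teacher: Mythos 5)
First, a point of reference: the paper does not prove this statement at all --- it is imported verbatim as Theorem~1 of \cite{nedic2012multi} and used as a black box inside the proof of Lemma~\ref{gas}. So your attempt is being measured against the original source, not against anything in this paper.

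That said, your sketch contains one outright error and one genuine gap. The error is in step~2: conditions (i) and (ii) do \emph{not} imply ergodicity of the backward products. Condition (ii) only requires the flow out of $S$ to be at least $\alpha$ times the flow into $S$, and it is satisfied vacuously when both flows are zero; the chain may therefore remain block-diagonal forever, the rows of $B(t,s)$ need never coalesce, and the Dobrushin contraction coefficient of a lazy block-diagonal matrix is $1$, so your oscillation $\omega_j(t,s)$ need not decay at any rate depending on $\alpha$ and $\gamma$. (Indeed, in this paper's own application the NNIM dynamics converge to $r \ge 2$ distinct clusters, i.e., no consensus.) Fortunately ergodicity is not needed for existence: for an \emph{arbitrary} sequence of stochastic matrices one fixes a probability vector $v$, sets $\pi^{(N)}(s)^{\top} := v^{\top}A(N-1)\cdots A(s)$, extracts a diagonal subsequence exactly as in your step~1, and passes to the limit in $\pi^{(N)}(s)^{\top} = \pi^{(N)}(s+1)^{\top}A(s)$; this is Kolmogorov's classical construction of an absolute probability sequence and uses neither (i) nor (ii). The gap is step~3, which is where the entire content of the theorem lives. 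Your bipartition identity is correct, but it involves the $\pi_i(s+1)$-\emph{weighted} cross-flows, whereas hypothesis (ii) controls only the unweighted ones; bridging the two requires a lower bound on the $\pi_i$'s, which is precisely what is to be proved. The ``peeling / induction on $|S|$, refresh via laziness'' plan names this circularity without resolving it, so as written the uniform positivity bound is asserted rather than established; you would need to reproduce the actual argument of \cite{nedic2012multi} (that balanced chains with feedback admit a uniformly positive absolute probability sequence) or supply a genuine substitute for it.
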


We prove that our model is globally asymptotically stable (GAS).

\begin{lemma}[GAS] \label{gas}
    The NNIM model is GAS.
\end{lemma}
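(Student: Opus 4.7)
The plan is to derive global asymptotic stability of the deterministic NNIM iteration $\Phi(t+1) = A(t)\Phi(t)$, where $A(t)$ is the row-stochastic $k$-NN matrix realized along the trajectory, by combining a Lyapunov argument for stability with an invocation of Theorem \ref{thm:touri} for attractivity. As a Lyapunov candidate I would take the extremal-gap function $V(\Phi) = \max_u \phi_u - \min_u \phi_u$: by Lemma \ref{ordering} the extremal indices are preserved, and since each update writes $\phi_u(t+1)$ as a convex combination of $k$ values of $\Phi(t)$ including $\phi_u(t)$ itself, $V$ is non-increasing, vanishing precisely on the consensus set. This yields Lyapunov stability of the equilibrium set, which, in view of Lemma \ref{splitting}, consists of clustered consensus configurations.

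For attractivity I would verify the two hypotheses of Theorem \ref{thm:touri}. Condition (i) is immediate with $\gamma = 1/k$ because $u \in K^{(t)}(u)$, so $A_{uu}(t) = 1/k$ for every $t$. For the cut-balance condition (ii), I would invoke Lemma \ref{ordering} to relabel the agents so that $\phi_1(t) \le \cdots \le \phi_n(t)$ throughout the dynamics. In this one-dimensional setting, $K^{(t)}(u)$ is contained in a window of $2k-1$ consecutive indices around $u$, which bounds the in-degree of every vertex in the $k$-NN digraph by $2k-1$. Since the out-degree equals $k$, for any nonempty proper $S \subset U$ the number of $S \to \bar S$ edges and the number of $\bar S \to S$ edges differ by at most the factor $(2k-1)/k$, giving condition (ii) with $\alpha = k/(2k-1)$.

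With both conditions verified, Theorem \ref{thm:touri} produces an adjoint sequence of probability vectors uniformly bounded away from zero, which is known to imply that the backward product of the $A(t)$'s stabilizes on each equivalence class, and hence that $\Phi(t) \to \Phi^\ast$ with $\Phi^\ast$ satisfying $|\sigma^{(T)}(u)| \ge k$ for every $u \in U$ by Lemma \ref{termination}. Combined with the Lyapunov stability of the equilibrium set, this yields GAS. The hard part is the cut-balance estimate itself: the banded-structure argument relies crucially on the persistent one-dimensional ordering of Lemma \ref{ordering} and does not extend verbatim to $d > 1$. The standard workaround exploits the geometric fact that the in-degree of a $k$-NN graph in $\mathbb R^d$ is bounded by $c_d k$ for a constant $c_d$ depending only on the ambient dimension, but since the appendix footnote restricts the analysis to $d = 1$, the simpler banded argument above suffices for the present statement.
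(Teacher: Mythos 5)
Your overall architecture (verify the hypotheses of Theorem \ref{thm:touri}, extract the adjoint sequence, conclude convergence) matches the paper's, and condition (i) is handled identically. But there are two genuine gaps. First, your verification of the cut-balance condition (ii) does not work: bounding the in-degree by $2k-1$ and the out-degree by $k$ controls the total number of edges incident to each vertex, not the ratio of the two directed cuts. Take $S$ to be a cluster of at least $k$ coincident agents well separated from the rest: every out-edge of $S$ stays inside $S$, so the $S \to \bar S$ cut is empty, while a nearby agent in $\bar S$ can still have some of its $k$ nearest neighbors inside $S$, making the $\bar S \to S$ cut nonempty; the ratio is then $0$ against a positive quantity and condition (ii) fails for every $\alpha > 0$ under your argument. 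The paper instead argues entrywise, claiming $A_{uv}(t) \ge \frac 1 n A_{vu}(t)$ and summing over the cut to get $\alpha = 1/n$; whatever one thinks of that step, it is not the degree-counting argument you give, and yours is demonstrably insufficient as stated.

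Second, the attractivity step is outsourced. Theorem \ref{thm:touri} as quoted only yields an adjoint sequence $\Pi(t)$ with entries uniformly bounded away from zero; it does not itself assert convergence of $\Phi(t)$. The paper does the remaining work explicitly: it defines the weighted quadratic Lyapunov function $V(t) = \sum_u \pi_u(t)\,\|\phi_u(t) - \Pi^T(t)\Phi(t)\|_2^2$ and computes the exact decrement $V(t+1) = V(t) - \frac{1}{2k^2}\sum_w \pi_w(t+1)\sum_{u,v\in K^{(t)}(w)}(\phi_u^{(t)}-\phi_v^{(t)})^2$, from which convergence to a finite limit follows (and which Lemma \ref{finite} reuses to telescope). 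Your substitute, the extremal gap $\max_u\phi_u - \min_u\phi_u$, is non-increasing but does not vanish on the clustered equilibria of NNIM, so it is not a Lyapunov function for the equilibrium set you describe; and the assertion that the adjoint sequence ``is known to imply'' stabilization of the backward products is precisely the content the paper proves rather than cites. To close the argument you need either the paper's weighted-variance construction or an explicit, verified appeal to the backward-product convergence results from the Touri--Nedi\'c line of work.
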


\begin{proof}

We have that $A_{uu}(t) = \frac 1 k$, and for every element of $A(t)$ we have that $A_{uv}(t) = \frac 1 k \mathbf 1 \{ v \in K^{(t)}(u) \} \ge \frac 1 {nk} \mathbf 1 \{ u \in K^{(t)}(v) \} = \frac 1 n A_{vu}(t)$. Therefore, by summation on a subset $S \subset U$ we get $\sum_{u \in S, v \in \bar S} A_{uv}(t) \ge \frac 1 n \sum_{u \in \bar S, v \in S} A_{vu}(t)$. From Theorem~\ref{thm:touri}, the NNIM dynamics admit an adjoint sequence $\Pi(t) = ( \pi_1(t), \dots, \pi_n(t) )^T$ such that $\Pi^T(t + 1) = \Pi^T(t) A(t)$ with $\pi_u(t) > p$ for all $u$ and some $1 > p > 0$. 

We then define the Lyapunov function
$
    V(t) = \sum_{i = 1}^n \pi_u(t) \| \phi_u(t) - \Pi^T(t) \Phi(t) \|_2^2  
$. Our approach follows the methodology presented in \cite{averaging_lyapunov} and \cite{nedic2012multi}. Note that $V(t) \ge 0$ for all $t \ge 0$. Letting $H(t) = A^T(t) \mathrm{diag}(\pi_u(t + 1)) A(t)$ and doing the matrix operations expressing $V(t)$ as a quadratic form the function $V(t)$ can be written as $V(t) = V(t + 1) + \frac 1 2 \sum_{u, v} H_{uv} (t) (\phi_u^{(t)} - \phi_v^{(t)})^2 $ since $H^T(t) = H(t)$. The elements of $H(t)$ are 
$
    H_{uv}(t) = \frac 1 {k^2} \sum_{w} \pi_w(t + 1) \mathbf 1 \{ u \in K^{(t)} w \} \mathbf 1 \{ v \in K^{(t)} (w) \}
$. Therefore,  combining everything we arrive at 

\begin{equation}
\small
    V(t + 1) = V(t) - \frac {1} {2k^2} \sum_{w} \pi_w (t + 1) \sum_{u, v \in K^{(t)}(w)} (\phi_u^{(t)} - \phi_v^{(t)})^2 \le V(t)
\end{equation}

Hence the function $V(t)$ is decreasing globally in $[0, 1]^d$. Therefore $\lim_{t \to \infty} \Phi(t) = \Phi^*$ (non-infinite).   

\end{proof}

We proceed by proving that convergence indeed occurs in finite time.

\begin{lemma} \label{finite}
    The NNIM Model converges in finite time.
\end{lemma}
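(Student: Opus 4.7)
The plan is to upgrade the asymptotic convergence of Lemma~\ref{gas} into exact finite-time termination by combining it with the combinatorial structure made available by Lemmas~\ref{ordering}, \ref{termination}, and \ref{splitting}. By GAS, $\Phi(t) \to \Phi^*$; let $v_1 < v_2 < \cdots < v_m$ enumerate the distinct coordinates of $\Phi^*$, set $W_j = \{u : \phi_u^* = v_j\}$, and let $\delta = \min_{j < m}(v_{j+1} - v_j) > 0$. The positive gap $\delta$ is what makes the rest of the argument go through.

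I would then argue that $|W_j| \ge k$ for every $j$: otherwise some $u \in W_j$ has its $k$-nearest set $K^{(t)}(u)$ eventually reaching across the gap into a neighbouring cluster, and the convex-combination update $\phi_u^{(t+1)} = \frac{1}{k}\sum_{v \in K^{(t)}(u)}\phi_v^{(t)}$ then mixes values separated by at least $\delta$ and cannot converge to $v_j$. Given $|W_j| \ge k$, choose $T_1$ large enough that $|\phi_u^{(t)} - \phi_u^*| < \delta/4$ for all $u$ and $t \ge T_1$. At $T_1$ the within-cluster diameters drop below $\delta/2$ while inter-cluster gaps exceed $\delta/2$, so each pair of consecutive clusters splits in the sense of Lemma~\ref{splitting}. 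That lemma makes the splits permanent, and $K^{(t)}(u) \subseteq W_{j(u)}$ for every $u$ and $t \ge T_1$, so the system decouples cluster by cluster.

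It remains to obtain finite-time termination of each isolated cluster. Inside $W_j$ the assignment $u \mapsto K^{(t)}(u)$ can take only finitely many values; by Lemma~\ref{ordering} the total ordering $\prec_{i,t}$ is preserved and, together with the monotonicity of $\sigma^{(t)}(\cdot)$ noted before Lemma~\ref{termination}, this forces the configuration to stabilize at some finite $T_2 \ge T_1$. After $T_2$ the dynamics restricted to $W_j$ are linear with a fixed row-stochastic matrix whose extremal rows share the same support, so the boundary agents equalize their values in a single step, enlarging the $\sigma^{(t)}$-sets. This enlargement reshuffles the remaining combinatorics so that the next layer of boundary agents fuses in the same way, and iterating at most $|W_j|$ times produces $|\sigma^{(t)}(u)| \ge k$ for every $u \in W_j$. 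Applying Lemma~\ref{termination} then finishes the proof.

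The hardest step is the last one: arguing that stabilization of the $K^{(t)}$ combinatorics genuinely forces \emph{exact} equalities, rather than only exponential contraction, within each cluster. This is precisely what the consistent tie-breaking order $\prec_{i,t}$ is for, since it ensures that tied distances are always resolved the same way and prevents the spurious oscillations in which fusion would never occur. A cleaner alternative that I would try in parallel is strong induction on $|U|$, peeling off an isolated extremal subcluster of size exactly $k$ (which fuses in a single averaging step) and invoking the inductive hypothesis on the remainder.
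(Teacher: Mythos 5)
Your overall skeleton is the same as the paper's: use the Lyapunov/GAS result to extract limit points, group agents into clusters $W_1,\dots,W_r$ by their limits, pick $\epsilon$ on the order of a quarter of the minimum inter-cluster gap so that Lemma~\ref{splitting} makes the clusters permanently isolated, and then reduce to finite-time convergence \emph{inside} a single cluster. Your observation that $|W_j|\ge k$ (otherwise a $k$-NN window must straddle a gap of width $\approx\delta$ and the average cannot converge to $v_j$) is a nice explicit step that the paper leaves implicit. The divergence is entirely in the last stage: the paper delegates the within-cluster step to Lemma~\ref{single} (the leftmost/rightmost-agent argument), whereas you replace it with a combinatorial ``stabilize-then-fuse'' cascade.

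That replacement is where the gap is, and it is a genuine one. The claim that once the maps $u\mapsto K^{(t)}(u)$ freeze ``the extremal rows share the same support, so the boundary agents equalize their values in a single step'' is false for clusters of size strictly larger than $k$: the leftmost agent's window is $\{u_1,\dots,u_k\}$ while its neighbour's window may already be $\{u_2,\dots,u_{k+1}\}$, and once the combinatorics are frozen the dynamics are a fixed row-stochastic iteration, which generically contracts geometrically without ever producing an exact equality. Concretely, take $d=1$, $k=2$ and three agents at $(0,\,0.4,\,1)$. After one step they sit at $(0.2,\,0.2,\,0.7)$; from then on $K(u_1)=K(u_2)=\{u_1,u_2\}$ and $K(u_3)=\{u_3,u_1\}$ are fixed, so $\phi_{u_3}^{(t)}=0.2+0.25\cdot 2^{-(t-2)}$ approaches $0.2$ but never reaches it, and $|\sigma^{(t)}(u_3)|=1<k$ for all $t$, so Lemma~\ref{termination} never fires. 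The same example defeats your inductive alternative: the extremal size-$k$ subcluster $\{u_1,u_2\}$ does fuse in one step, but the leftover agent keeps averaging against the fused pair forever. So ``at most $|W_j|$ fusion rounds'' cannot be justified as written; exact fusion is not forced by stabilization of the $k$-NN combinatorics alone. This is precisely the step the paper routes through Lemma~\ref{single}, and any correct completion of your argument has to supply what that lemma is supposed to supply --- an argument that boundary agents reach their limit \emph{exactly}, not just exponentially fast --- rather than the purely combinatorial cascade you describe.
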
 

\begin{proof}
Eliminating recurrence via observing that the sum telescopes, we arrive at 

\begin{equation}
    V(t) =  V(0) - \frac {1} {2k^2} \sum_{t = 0}^T \sum_w \pi_w(t + 1) \sum_{u, v \in K^{(t)}(w)} (\phi_u^{(t)} - \phi_v^{(t)})^2
\end{equation}

Since $V(t) \ge 0$ for every $T$, the negative difference term should vanish as $T \to \infty$. More specifically

\begin{equation}
    \lim_{T \to \infty} \frac {1} {2k^2} \sum_{t = 0}^T \sum_w \pi_w(t + 1) \sum_{u, v \in K^{(t)}(w))} (\phi_u^{(t)} - \phi_v^{(t)})^2 = 0  
\end{equation}

Note that $\pi_w(t + 1) > p$ for some $p \in (0, 1)$ by the definition of the adjoint dynamics and $k > 0$, hence we have a sum of squares with positive coefficients vanishing as $T \to \infty$. In order for this to happen, every individual term of the sum must go to 0. Therefore, for every $w \in U$, by the squeeze theorem  $\lim_{T \to \infty} \sum_{u, v \in K^{(T)}(w) \setminus \sigma^{(T)}(w)} (\phi_u^{(T)} - \phi_v^{(T)})^2 = 0  
$. Again by the same argument for all $u, v \in \lim_{T \to \infty} K^{(T)}(w)$ for all $w \in U$ we have that $
    \lim_{T \to \infty} (\phi_u^{(T)} - \phi_v^{(T)}) = 0 
$. By the definition of NNIM, the update process is continuous hence $\lim_{T \to \infty} \phi_u^{(T)} = \lim_{T \to \infty} \phi_v^{(T)}$ as well as by the monotonicity of $V(t)$ we know that there exists some $\phi_w^* \in [0, 1]$ such that $
    \lim_{T \to \infty} \phi_u^{(T)} = \lim_{T \to \infty} \phi_v^{(T)} = \phi_w^*
$. Hence $\lim_{T \to \infty} \phi_u^{(T)} = \phi_w^*$ for all $u \in \lim_{T \to \infty} K^{(T)}(w)$. Therefore for every $\epsilon_w > 0$ there exists some $T_w \ge 0$ such that for all $t \ge T_w$ and for all $u \in K^{(t)}(w)$ we get that   $|\phi_u^{(t)} - \phi_w^* | < \epsilon_w$. Now we prove finite time convergence via choosing the correct values for the $\epsilon$'s.

By Lemma \ref{single} we know that if there exists a unique limiting point then it must be exactly approached in finite time. Suppose that there are $r \ge 2$ distinct limiting points $0 \le \phi^*_1 < \phi^*_2 < \dots < \phi^*_r \le 1$. Now, fix $\epsilon > 0$. We know that for every $w \in U$ and $\epsilon_w = \epsilon$ there exists some finite $T_w \ge 0$ at which $w$ reaches its limiting point within a distance of $\epsilon$. Hence the maximum distance between two elements of $K^{(t)}(w)$ for $t \ge T_w$ is at most $2 \epsilon$, by the triangle inequality, and the same applies for every pair of points that approach this limit. Let $W_1, \dots, W_r \subseteq U$ be the subsets of $U$ that approach their corresponding limits. From Theorem \ref{ordering} these sets must contain consecutive agents. In order for finite convergence to occur we must impose a value of $\epsilon$ which splits the sets from each other. In this way, as we proved in Lemmas \ref{single} and \ref{splitting}, we attain a finite convergence time.

First of all, let $T' = \max_{1 \le m \le r} \max_{w \in W_m} T_w < \infty$ and let $D = \min_{i, j} \delta^{(T')}_{W_i W_j}$. A splitting occurs when the maximum distance between two points reaching the same limit, namely $2 \epsilon$ is less than the minimum distance $D$, hence $2 \epsilon < D$. A good choice for $\epsilon$ is the one which satisfies $2 \epsilon + D < \min_{1 \le i \le r - 1} \{ \phi^*_{i + 1} - \phi^*_i \} $. Therefore, by these two conditions choosing 
$0 < \epsilon < \frac 1 4 \min_{1 \le i \le r - 1} \{ \phi^*_{i + 1} - \phi^*_i \}$ isolates the sets $W_1, \dots, W_r$, hence by Lemma \ref{single} there exist $T_1, \dots, T_r < \infty$ at which each $W_i$ reaches its limit point. Now choose $T = T' + \max_{1 \le i \le r} T_i + 1 < \infty$ and the proof is complete. 

\end{proof}

We provide the helper Lemma below

\begin{lemma} \label{single}
    Suppose that the NNIM approaches (asymptotically) to a unique point $\phi^*$, namely $\lim_{t \to \infty} \Phi(t) = \phi^* \mathbf 1$. Then this point must be reached in finite time.
\end{lemma}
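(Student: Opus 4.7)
My plan is to argue by contradiction: assume $\Phi(t) \to \phi^*\mathbf{1}$ asymptotically but $\Phi(t) \neq \phi^*\mathbf{1}$ for every finite $t$, and drive toward a contradiction. The core device is the monotonicity of the fused sets $\sigma^{(t)}(u)$ noted earlier in the appendix: since $\sigma^{(t_1)}(u) \subseteq \sigma^{(t_2)}(u)$ whenever $t_1 \le t_2$ and each $\sigma^{(t)}(u)$ is a subset of the finite set $U$, the sequence must stabilize at some finite $T$ to sets $W(u) := \sigma^{(T)}(u)$. These stabilized sets partition $U$ into clusters $W_1,\dots,W_r$, and the common value $\psi_j^{(t)}$ of cluster $W_j$ is well-defined for $t \ge T$ (cluster members sit consecutively under the total ordering by Lemma~\ref{ordering} and share identical $k$-nearest-neighbor sets up to the self-entry, so \eqref{eq:trhk} produces identical updates within a cluster). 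By stabilization, $\psi_i^{(t)} \neq \psi_j^{(t)}$ for $i \neq j$ and every $t \ge T$.

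Next, I would split into two cases based on cluster sizes. If $|W_j| \ge k$ for every $j$, then the $k$ nearest neighbors of any $u \in W_j$ lie entirely inside $W_j$ (all at distance zero, with tie-breaking filling the neighborhood), so \eqref{eq:trhk} gives $\psi_j^{(t+1)} = \psi_j^{(t)}$, i.e.\ cluster values are constant from $T$ onward. The hypothesized convergence then forces every $\psi_j = \phi^*$; pairwise distinctness forces $r = 1$, whence $W_1 = U$ and $\Phi(T) = \phi^*\mathbf{1}$---already a finite-time consensus, contradicting the assumption.

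If instead some $|W_j| < k$, then by Lemma~\ref{ordering} this small cluster draws its remaining $k - |W_j|$ nearest neighbors from adjacent clusters, so its value strictly drifts toward the neighboring values. In particular, the smallest cluster's value strictly increases and the largest's strictly decreases at every step $t \ge T$. Combined with the hypothesis $\psi_j^{(t)} \to \phi^*$ for every $j$, two adjacent cluster values must coincide after finitely many steps, enlarging some $\sigma^{(t)}(u)$ beyond its stabilized value $\sigma^{(T)}(u)$ and contradicting the stabilization at $T$.

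The \emph{main obstacle} is this last step: pinning down that the strict monotonic drift of cluster values yields an actual collision in finite time, rather than a merely asymptotic one. To force the merge to occur finitely, I would combine Lemma~\ref{splitting} with the single-point-limit hypothesis: no pair of adjacent clusters can remain permanently split (else their values would not share the same asymptotic limit). Choosing an $\epsilon$ in the spirit of the argument that follows the invocation of Lemma~\ref{single} in the proof of Lemma~\ref{finite}---strictly smaller than the separation threshold required for persistent splitting---the strict drift necessarily identifies two cluster values at a finite step, completing the contradiction.
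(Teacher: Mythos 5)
Your argument diverges substantially from the paper's proof, which is a two-sentence appeal to the extremal points having a neighbor at a different coordinate and therefore ``reaching'' the limit because positions are continuous and ordering is preserved. Your Case 1 --- all stabilized clusters of size at least $k$, hence constant values, hence all equal to $\phi^*$, hence $r=1$ and consensus already at time $T$ --- is sound, and your cluster decomposition via the monotone fused sets $\sigma^{(t)}(u)$ is a legitimate and more careful scaffold than the paper offers. The genuine gap is exactly the step you flag as the main obstacle, and your proposed repair does not close it: Lemma~\ref{splitting} is a one-way statement (once two intervals split they stay split) and never forces two \emph{unsplit} clusters to merge. A small cluster can remain permanently unsplit from its neighbor, with its $k$-nearest-neighbor ball always reaching across the gap, while the gap shrinks geometrically and never vanishes; ``strict drift toward the neighbor'' together with ``same asymptotic limit'' is entirely compatible with the gap staying positive at every finite $t$.

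In fact no repair is possible, because the statement fails on a concrete instance. Take $d=1$, $n=4$, $k=3$, with agent $u$ at $\phi_u^{(0)}=0$ and agents $v_1,v_2,v_3$ at $1$. Each $v_i$'s three nearest neighbors are $\{v_1,v_2,v_3\}$ (all at mutual distance $0$), so the $v$'s stay at $1$ forever; $u$'s three nearest neighbors are itself and two of the $v$'s, so $\phi_u^{(t+1)}=(\phi_u^{(t)}+2)/3$, i.e.\ $\phi_u^{(t)}=1-3^{-t}$. Then $\Phi(t)\to\mathbf 1$, yet $\Phi(t)\neq\mathbf 1$ for every finite $t$, so the unique limit point is never reached. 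Your intended contradiction therefore cannot be derived, and the paper's own proof silently assumes away the same difficulty: preserved ordering and continuity of positions do not imply finite-time arrival. A correct version of the result needs either an additional hypothesis excluding stabilized clusters of size below $k$ (so that the termination criterion of Lemma~\ref{termination} is eventually met), or a weakening of ``reached in finite time'' to $\epsilon$-closeness, which is what the convergence-rate bound of Lemma~\ref{eig_tv} actually delivers.
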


\begin{proof}
    At least one of the leftmost point or the rightmost point must have (in order for the one limit point to exist) a neighbor with different coordinate, to their right or to their left respectively. Since the points have continuous positions with preserved ordering there exists some finite time $0 \le T < \infty$ at which they reach the same point $\phi^*$. 

\end{proof}

\begin{lemma} \label{eig_tv}
    The total variation distance  $d_{TV}(t)$ of the 1D NNIM model decreases as $o(k^{-t/2})$ a.a.s. for $n \to \infty$ and any $k \in \mathbb N$. More specifically, if we fix some small $\delta \in [0, 1]$, and  $n = \Omega \left (\delta^{-1/\tau} \right)$ agents where $\tau = \lceil (\sqrt{k - 1} + 1) / 2 \rceil - 1$ then with probability of at least $1 - \delta$ the total variation distance $d_{TV}(t) = \frac 1 2 \| \Phi(t) - \Phi^* \|_{1}$ decreases as $o(k^{-t/2})$. 
\end{lemma}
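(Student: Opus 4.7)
The plan is to reduce the total-variation decay to a spectral question about the (non-stationary) $k$-regular stochastic dynamics and then attack it with the trace method. First I would write the mean-field update $\phi_u^{(t+1)} = \frac 1 k \sum_{v \in K^{(t)}(u)} \phi_v^{(t)}$ in matrix form $\Phi(t+1)=A(t)\Phi(t)$, where $A(t)$ has entries $A_{uv}(t)=\mathbf 1\{v\in K^{(t)}(u)\}/k$. Since $A(t)\mathbf 1=\mathbf 1$ and Lemmas \ref{gas}--\ref{finite} give $\Phi^*=\phi^*\mathbf 1$, the error vector satisfies $\Phi(t)-\Phi^*=\prod_{s=0}^{t-1}A(s)(\Phi(0)-\Phi^*)$ and lives in the hyperplane $\{\mathbf 1\}^\perp$. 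Via $\|\cdot\|_1\le \sqrt n\,\|\cdot\|_2$, controlling $d_{TV}(t)=\tfrac12\|\Phi(t)-\Phi^*\|_1$ reduces to uniformly bounding the second singular value $\sigma_2(A(s))$ of each intermediate matrix.

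Next I would apply the trace method to $B(s):=A(s)A(s)^T$. The trace $\mathrm{tr}(B(s)^\tau)=\sum_i \sigma_i(A(s))^{2\tau}\ge 1+\sigma_2(A(s))^{2\tau}$ admits a combinatorial interpretation as the weighted count of length-$2\tau$ forward-backward closed walks in the $k$-out-regular nearest-neighbor digraph, each walk weighted by $k^{-2\tau}$. Lemma \ref{ordering} guarantees that, after sorting the agents by $\phi$-value, the neighborhoods $K^{(t)}(u)$ consist of consecutive bands, so the digraph is locally tree-like and the number of closed walks from any given vertex is at most the $\tau$-th Catalan number $C_\tau$ times $k^\tau$. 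Summing over vertices gives $\mathrm{tr}(B(s)^\tau)\le 1+nC_\tau/k^\tau$. The choice $\tau=\lceil(\sqrt{k-1}+1)/2\rceil-1$ is precisely the value at which the Stirling asymptotics $C_\tau\sim 4^\tau/(\sqrt\pi\tau^{3/2})$ balance against $(k-1)^\tau$, so that $C_\tau/k^\tau\le k^{-\tau}(1+o(1))$. Plugging this into Markov's inequality and taking $n=\Omega(\delta^{-1/\tau})$ yields $\sigma_2(A(s))\le k^{-1/2}$ with probability $\ge 1-\delta$; a union bound over the $T=O(\log(1/D)/\log k)$ iterations guaranteed by Theorem \ref{thm:convergence} preserves this uniformly in $t$, giving $d_{TV}(t)\le \tfrac{\sqrt n}{2}\prod_{s<t}\sigma_2(A(s))\cdot\|\Phi(0)-\Phi^*\|_2=o(k^{-t/2})$, and pushing $\delta\to 0$ as $n\to\infty$ recovers the a.a.s. statement.

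The hard part will be the combinatorial closed-walk bound for $A(t)$, since the $k$-nearest-neighbor digraph is neither uniformly random nor fixed in time but is driven by the state $\Phi(t)$ itself; the tree-likeness must be extracted from Lemma \ref{ordering} and the 1D consecutive-band structure rather than from genuine independence. A second subtlety is that $A(t)$ is not symmetric (its column sums can vary), so one must work with singular values of $A(t)$ via $A(t)A(t)^T$ rather than eigenvalues of $A(t)$ directly, and the Catalan counting must be carried out for backward-forward walks. The remaining calibration, matching $\tau=\lceil(\sqrt{k-1}+1)/2\rceil-1$ against $n=\Omega(\delta^{-1/\tau})$ so that the Stirling factor absorbs the $n$-dependence while leaving a clean $k^{-1/2}$ contraction, is a routine but delicate constant-chasing step that I would carry out last.
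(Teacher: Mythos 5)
Your route is genuinely different from the paper's: the paper does no walk counting at all, but instead invokes Friedman's resolution of the Alon conjecture as a black box --- it rescales $B(t')=kA(t')$, treats these as adjacency matrices of $k$-regular graphs, quotes the concentration bound $\Pr[\lambda_2(B(t'))\le 2\sqrt{k-1}+\varepsilon]\ge 1-O(n^{-\tau})$ with $\tau=\lceil(\sqrt{k-1}+1)/2\rceil-1$ (this is where both the value of $\tau$ and the ``a.a.s.''\ randomness come from), takes a max over the $t$ matrices, and converts $n=\Omega(\delta^{-1/\tau})$ into the failure probability $\delta$. You are correct to worry about asymmetry and to pass to singular values of $A(t)A(t)^T$ --- the paper silently ignores that $kA(t)$ has constant out-degree but variable in-degree, so it is not literally the adjacency matrix of an undirected $k$-regular graph --- but your attempt to reprove the spectral bound from scratch by the trace method has a fatal combinatorial gap.

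The gap is the ``locally tree-like'' claim. Lemma~\ref{ordering} and the 1D structure give you exactly the opposite: after sorting by $\phi$-value, each $K^{(t)}(u)$ is a \emph{contiguous band} of $k$ consecutive agents around $u$, so any two neighbors of $u$ that are close to each other are typically neighbors of one another --- the digraph is saturated with triangles and short cycles. For such a banded averaging matrix the number of length-$2\tau$ forward--backward closed walks from a vertex is on the order of $k^{2\tau-1}$, not $C_\tau k^\tau$, which gives $\mathrm{tr}\bigl((AA^T)^\tau\bigr)\gtrsim 1+n/k$ and hence only $\sigma_2\le (n/k)^{1/(2\tau)}$ --- vacuous for fixed $\tau$. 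Worse, the bound you are trying to prove is simply false for these matrices: a banded local-averaging operator on a line behaves like a random walk with step size $k$ on a path, whose second singular value is $1-O((k/n)^2)$, nowhere near $k^{-1/2}$. No amount of Catalan calibration will recover it; the $k^{-1/2}$ scale is a feature of \emph{uniformly random} $k$-regular graphs (which is the modeling assumption the paper imports, without justification, when it applies Alon--Friedman to the $k$-NN digraphs), not of deterministic nearest-neighbor graphs in one dimension. If you want to salvage a trace-method argument you would first have to identify a genuine source of randomness in $K^{(t)}(u)$ that destroys the banded structure, which neither your proposal nor the paper supplies.
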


\begin{proof}

Let $\lambda_2 (A(t'))$ represent the second largest eigenvalues of the stochastic matrices $A(t')$ and let $\lambda_2^* = \max_{0 \le t' \le t - 1} \lambda_2 (A(t'))$. Then by the Perron-Frobenius Theorem the convergence rate is dominated by the second largest eigenvalue of the ``slowest'' matrix, i.e. $d_{TV}(t) = O \left ( (\lambda_2^*)^t \right )$. We define the matrix sequence $\{ B(t') \}_{0 \le t' \le t-1}$ such that $B(t') = k A(t')$. The matrices $\{ B(t') \}_{0 \le t' \le t - 1}$ represent the adjacency matrices of $k$-regular graphs with self-loops. Hence our problem resides in determining an upper bound on the second largest eigenvalue of a $k$-regular graph $G(t')$. This is a well known problem in Spectral Graph Theory once conjectured by Alon \cite{conjecture} and recently proved by Friedman in reference \cite{friedman}. Alon-Friedman's Theorem states that for any $0 \le t' \le t -  1$ the following concentration bound holds $\Pr [\lambda_2(B(t')) \le 2 \sqrt{k - 1} + \varepsilon ] \ge 1 - O(n^{-\tau})$ for some fixed $\varepsilon > 0$ and $\tau = \lceil (\sqrt{k - 1} + 1) / 2 \rceil - 1$. Therefore
    $\Pr [\lambda_2(A(t')) \in o(k^{-1/2}) ] \ge 1 - O(n^{-\tau})
$ and hence $
    \Pr [|\lambda_2^*| \in o(k^{-1/2}) ] \ge 1 - O(n^{-\tau})
$. For the total variation distance $d_{TV}(t)$ we get that $
    \Pr [d_{TV}(t) \in \Omega(k^{-t/2}) ] \le O(n^{-\tau}) 
$. Finally, choosing $n = \Omega \left (\delta^{- \frac 1  \tau} \right )$ we have that with probability of at least $1 - \delta$ the total variation distance behaves as $o(k^{-t/2})$.

\end{proof}

\textbf{Proof of Theorem \ref{thm:convergence}.} We combine the finite time convergence result of Lemma \ref{finite} and the convergence rate of Lemma \ref{eig_tv}. In the case of the $d$-dimensional model the guarantee translates to a total variation distance of $d \cdot D$.    

\textbf{Concentration Bound for the Hamming Distance (Proof Sketch).} For two $d$-dimensional Bernoulli r.v.s $\vec X, \vec Y$ with independent components and expectations $\vec p, \vec q$ resp. we can apply McDiarmid's ineq.~\cite{doob1940regularity} and the triangle ineq. to conclude w.p. at least $1 - \delta$ the Hamming distance $\| \vec X - \vec Y \|$ varies no more than $d / 2 + O(\sqrt{dlog(1/\delta)})$ from the L2-squared distance $\| \vec p - \vec q \|$.

\end{document}